\newcommand{\bP}{{\bf P}}
\newcommand{\bD}{{\bf D}}
\newcommand{\bI}{{\bf I}}
\newcommand{\bZ}{{\bf Z}}
\newcommand{\cD}{\mathcal{D}}
\newcommand{\cS}{\mathcal{S}}
\newcommand{\csn}{{\mathcal{S}_n}}
\newtheorem{theorem}{Theorem}
\newtheorem{lemma}{Lemma}
\theoremstyle{definition}
\newtheorem{remark}{Remark}
\newcommand{\eat}[1]{}
\newcommand{\bone}{\boldsymbol{1}}
\DeclareMathOperator*{\argmax}{arg\,max}
\thanks{Inria Sophia Antipolis, France. {Email: \tt k.avrachenkov@sophia.inria.fr}}%
\thanks{Purdue University, IN, USA {Email: \tt ribeiro@cs.purdue.edu}}  
\thanks{Inria Sophia Antipolis, France. {Email: \tt jithin.sreedharan@inria.fr}}\thanks{Corresponding author}
\begin{document}
\RRNo{8793}
\makeRR   

\section{Introduction}
What is the fraction of male-female connections against that of female-female connections in a given Online Social Network (OSN)? Is the OSN  assortative or disassortative?
Edge, triangle, and node statistics of OSNs find applications in computational social science (see e.g.~\cite{ottoni2013ladies}), epidemiology~\cite{pastor2001epidemic}, and computer science~\cite{benevenuto2009characterizing,gjoka20132,putzke2014cross}.
Computing these statistics is a key capability in large-scale social network analysis and machine learning applications.
But because data collection in the wild is often limited to partial OSN crawls through Application Programming Interface (API) requests,
observational studies of OSNs -- for research purposes or market analysis -- depend in great part on our ability to compute network statistics with incomplete data.
Case in point, most datasets available to researchers in widely popular public repositories are partial OSN crawls\footnote{The majority of the datasets in the public repositories SNAP~\cite{Snapdata} and KONECT~\cite{kunegis2013konect} are partial website crawls, not complete datasets or uniform samples.}.
Unfortunately, these incomplete datasets have unknown biases and no statistical guarantees regarding the accuracy of their statistics.
To date, the best methods for crawling networks~(\cite{avrachenkov2010improving,gjoka2011practical,Ribeiro2010}) show good real-world performance but only provide statistical guarantees asymptotically (i.e., when the entire OSN network is collected).

This work addresses the fundamental problem of obtaining unbiased and reliable node, edge, and triangle statistics of OSNs via partial crawling.
{\em To the best of our knowledge our method is the first to provide a practical solution to the problem of computing OSN statistics with strong theoretical guarantees from a partial network crawl.}
More specifically, we (a) provide a provable finite-sample unbiased estimate of network statistics (and their spectral-gap derived variance) and (b) provide the asymptotic posterior of our estimates that performs remarkably well all tested real-world scenarios.

More precisely, let $G=(V,E)$ be an undirected labeled network -- not necessarily connected -- where $V$ is the set of vertices and $E \subseteq V \times V$ is the set of edges.
Both edges and nodes can have labels.
Network $G$ is unknown to us except for $n > 0$  {\em arbitrary}  initial seed nodes in $I_n \subseteq V$.
Nodes in $I_n$ must span all the different connected components of $G$.
From the seed nodes we crawl the network starting from $I_n$ and obtain a set of crawled edges $\cD_m(I_n)$, where $m > 0$ is a parameter that regulates the number of website API requests.
With the crawled edges $\cD_m(I_n)$ we seek an unbiased
estimate of
\begin{equation}\label{e:mu}
\mu(G) = \sum_{(u,v) \in E} f(u,v) \, .
\end{equation}
Note that functions of the form eq.~\eqref{e:mu} are general enough to compute node statistics
\[
\mu_\text{node}(G) = \sum_{(u,v) \in E} g(v)/d_v \, ,
\]
where $d_u$ is the degree of node $u \in V$,
and statistics of triangles such as the local clustering coefficient of $G$ first provided by~\cite{Ribeiro2010}
\[
\mu_\bigtriangleup(G) = \frac{1}{|V|} \sum_{(u,v) \in E}  \frac{{\bf 1}(d_v > 2)}{d_v}  \frac{\sum_{a \in N_v}\sum_{b \in N_v, b \neq a} {\bf 1}((v,a) \in E \cap (v,b) \in E \cap (a,b) \in E) }{\binom{d_v}{2}}  \, ,
\]
where the expression inside the sum is zero when $d_v < 2$ and $N_v$ are the neighbors of $v \in V$ in $G$.
Our task is to find estimates of general functions of the form $\mu(G)$ in eq.~\eqref{e:mu}.

\subsection*{Contributions}
In our work we provide a partial crawling strategy using random walk tours whose posterior
\[
P[\mu(G) | \cD_m(I_n)] \,
\]
is shown to have an unbiased maximum a posteriori estimate (MAP) $\hat{\mu}_\text{MAP}(\cD_m(I_n))$ regardless of the number of nodes in the seed set $n > 0$ and regardless of the value of $m > 0$, i.e., $E[\hat{\mu}_\text{MAP}(\cD_m(I_n))] = \mu(G),$ $\forall n,m > 0$.  Note that we guarantee that our MAP estimate is unbiased in the finite-sample regime unlike previous asymptotic methods~\cite{avrachenkov2010improving,gjoka2011practical,lee2012beyond,Ribeiro2010,ribeiro2012sampling}. Moreover, we provide the posterior $P[\mu(G) | \cD_m(I_n)]$ for the large $m$ regime and prove its convergence in distribution showing its convergence rate.
In our experiments we note that the posterior is remarkably accurate using a variety of networks large and small.
We also provide upper and lower bounds for $P[\mu(G) | \cD_m(I_n)]$.

\subsection*{Related Work}
The works of \cite{Massoulie2006} and \cite{Cooper2013} are the ones closest to ours. \cite{Massoulie2006} estimates the size of a network based on the return times of random walk tours. \cite{Cooper2013} estimates number of triangles, network size, and subgraph counts from weighted random walk tours using results of \cite{Aldous_2014}. The previous works on non-asymptotic inference of network statistics from incomplete network crawls~\cite{Goel2009,Koskinen2010,Koskinen2013,Handcock2010,Heckathorn:1997aa, ligo2014controlled, thompson2006targeted} need to fit the partial observed data to a probabilistic graph model such as ERGMs (exponential family of random graphs models). Our work advances the state-of-the-art in estimating network statistics from partial crawls because: (a) we estimate statistics of arbitrary edge functions without assumptions about the graph model or the underlying graph; (b) we do not need to bias the random walk with weights; this is particularly useful when estimating multiple statistics reusing the same observations; (c) we derive upper and lower bounds on the variance of estimator, which both show the connection with the spectral gap; and, finally, (d) we compute a posterior over our estimates to give practitioners a way to access the confidence in the estimates without relying on unobtainable quantities like the spectral gap and without assuming a probabilistic graph model.

The remainder of the paper is organized as follows. In Section~\ref{s:theory} we introduce our main theorems and supporting lemmas and proofs.
In Section~\ref{s:illustrative_examples} we introduce artificial illustrative examples to aid understanding our method.
In Section~\ref{s:results} we introduce our results using simulations over real-world networks.
Finally, in Section~\ref{s:conclusions} we present our conclusions.

\section{Network Estimation from Partial Crawls}
\label{s:theory}


In this section we present our main results.
The outline of this section is as follows.
Section~\ref{s:pre} introduces key concepts and defines the notation used throughout this manuscript.
Section~\ref{s:main} introduces our main results in the form of two theorems:
Theorem~\ref{th:unbias} presents an unbiased estimator of any function over edges of an undirected graph using random walk tours.
Our random walk tours are shorter than the ``regular random walk tours'' because the ``node'' that they start from is an amalgamation of a multitude of nodes in the graph. Here, we briefly explains the approximate posterior of the estimator in Theorem~\ref{th:unbias}.
Section~\ref{s:proof1} proves Theorem~\ref{th:unbias}, introducing important upper and lower bonds of the
estimator variance in Section~\ref{s:var} and showing the effect of the spectral gap.
Finally, Section~\ref{s:BayesianInference} derives the approximate Bayesian posterior \eqref{e:post} also using the bounds obtained in Section~\ref{s:var}.

\subsection{Preliminaries}\label{s:pre}
Let $G = (V,E)$ be an unknown undirected graph.
Our goal is to find an unbiased estimate of $\mu(G)$ in eq.~\eqref{e:mu}
and its posterior by crawling a small fraction of $G$.
We are given a set of $n > 0$ initial {\em arbitrary} nodes denoted $I_n \subset V$.
If $G$ has disconnected components $I_n$ must span all the different connected components of $G$.

Our network crawler is a classical random walk over the following augmented multigraph $G'=(V',E')$.
A multigraph is a graph that can have multiple edges between two nodes.
In $G'$ we aggregate all nodes of $I_n$ into a single node, denoted hereafter $\cS_n$, the {\em super-node}.
Thus, $V' = \{V \backslash I_n\} \cup \{\cS_n\}$.
The edges of $G'$ are $E' = E \backslash \left\{E \cap \{I_n \times V\} \right\} \cup \{(\cS_n,v) : \forall (u,v) \in E, \text{ s.t.\ } u\in I_n\text{ and }v \in V \backslash I_n\}$, i.e., $E'$ contains all the edges in $E$ including the edges from the nodes in $I_n$ to other nodes, and $I_n$ is merged into the super-node $S_n$.
Note that $G'$ is necessarily connected as $I_n$ spans all the connected components of $G$.

A random walk on $G'$ has transition probability from node $u$ to an adjacent node $v$, $p_{uv}:=\bP_{u,v}$ with $\alpha_{u,v}/d_{u}$, where $d_u$ is the degree of $u$ and $\alpha_{u,v}$ is the number of edges between $u \in V'$ and $v \in V'$.
We note that the theory presented in the paper can be extended to more sophisticated random walks as well.
Let $\pi_i$ be the stationary distribution at node $i$ in the random walk on $G'$. 

A random walk \textit{tour} is defined as the sequence of nodes $X_1^{(k)},\ldots,X_{\xi_k}^{(k)}$ visited by the random walk during successive $k$-th and $k+1$-st visits to the super-node $\cS_n$. Here $\{\xi_k\}_{k\geq 1}$ denote the successive return times to $\csn$. Tours have a key property: from the renewal theorem tours are independent since the returning times act as renewal epochs.
Moreover, let $Y_1,Y_2,\ldots,Y_n$ be a random walk on $G'$ in steady state.

Note that the random walk on $G'$ is equivalent to a random walk on $G$ where all the nodes in $I_n$ are treated as {\bf one single node}.

The function $f$ is redefined on $G'$ as follows: for $(u,v) \in E'$, $f(u,v))$ remains same when $u \notin \csn$ and $v \notin \csn$. But when $u \in \csn$ or $v \in \csn$, $f(u,v)$ is redefined as zero.

\subsubsection*{Super-node Motivation}
The introduction of super-node is primary motivated by the following three reasons:
\begin{itemize}
\item { \em Tackling disconnected or low-conductance graphs:} When the graph is not strongly connected or has many connected components, forming a super-node with representatives from each of the components make the modified graph connected and suitable for applying random walk theory. Even when the graph is connected, it might not be well-knit, i.e., it has low conductance. Since the conductance is closely related to mixing time of Markov chains, such graph will prolong the mixing of random walks. But with proper choice of super-node, we can reduce the mixing time and, as we show, improve the estimation accuracy. This idea is illustrated with a Dumbell graph example in Section \ref{s:illustrative_examples}.
\item {\em Faster Estimate with Shorter Tours:} The expected value of the $k$-th tour length $E[\xi_k]=1/\pi_{\cS_n}$ is inversely proportional to the degree of the super-node $d_{\cS_n}$. Hence, by forming a massive-degree super-node we can significantly shorten the average tour length.
\end{itemize}

\subsection{Main Results} \label{s:main}
In what follows we present our main results. Theorem~\ref{th:unbias} proposes an unbiased estimator of edge characteristics $\mu(G)$ via random walk tours. Then we present the approximate posterior distribution of the unbiased estimator presented in Theorem~\ref{th:unbias}.
\begin{theorem}
\label{th:unbias}
Let $G$ be an unknown undirected graph where $n > 0$ initial {\em arbitrary} set of nodes is known $I_n \subseteq V$ which span all the different connected components of $G$.
Consider a random walk on the augmented multigraph $G'$ described in Section~\ref{s:pre} starting at super-node $\cS_n$.
Let $(X_t^{(k)})_{t=1}^{\xi_k}$ be the $k$-th random walk tour until the walk first returns to $\cS_n$ and let
$\cD_m(\cS_n)$ denote the collection of all nodes in $m \geq 1$ such tours, $\cD_m(\cS_n) = \left( (X_t^{(k)})_{t=1}^{\xi_k}\right)_{k=1}^m$.
Then,
\begin{equation}\label{e:hatmu}
\hat{\mu}(\cD_m(\cS_n)) = \overbrace{\frac{d_{\cS_n}}{2m} \sum_{k=1}^m \sum_{t=2}^{\xi_k} f(X^{(k)}_{t-1},X^{(k)}_t)}^\text{Estimate from crawls} \quad + \overbrace{\sum_{\mathclap{\substack{(u,v) \in E \text{ s.t.\ } \\ u\in I_n \text{ or } v \in I_n }}}^{\vphantom{\xi_k}} f(u,v)}^\text{Edges between initial nodes at original $G$}
\end{equation}
is an unbiased estimate of $\mu(G)$, i.e., $E[\hat{\mu}(\cD_m(\cS_n))] = \mu(G)$. Moreover the estimator is strongly consistent, i.e., $\hat{\mu}(\cD_m(\cS_n)) \to \mu(G)$ a.s.\ for $m\to \infty$.
\end{theorem}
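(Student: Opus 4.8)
The plan is to split $\mu(G)$ into a deterministic piece that is exactly the second term of \eqref{e:hatmu}, plus a piece estimated by the tours, and then obtain unbiasedness of the tour estimator from the regenerative structure of the walk on $G'$. First I would write $\mu(G)=\mu(G')+\sum_{(u,v)\in E:\,u\in I_n\text{ or }v\in I_n}f(u,v)$, where $\mu(G'):=\sum_{(u,v)\in E'}f(u,v)$ with $f$ redefined to vanish on every edge incident to $\csn$, as in Section~\ref{s:pre}. Since the edges of $G'$ on which $f$ is nonzero are precisely the edges of $G$ with both endpoints outside $I_n$, we have $\mu(G')=\sum_{(u,v)\in E:\,u\notin I_n,\,v\notin I_n}f(u,v)$, so it remains to show that the ``estimate from crawls'' term has expectation $\mu(G')$.

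Next I would record the standard facts about the random walk on the connected multigraph $G'$. Its stationary distribution is $\pi_i=d_i/\vol(G')$ (with $\vol(G')=\sum_{j\in V'}d_j$), so the stationary frequency of the ordered transition $(i,j)$ is $\pi_ip_{ij}=\alpha_{i,j}/\vol(G')$, whence $E_\pi[f(Y_1,Y_2)]=\sum_{i,j}\pi_ip_{ij}f(i,j)=\tfrac{2}{\vol(G')}\mu(G')$; and by Kac's formula the mean return time to the super-node is $E[\xi_1]=1/\pi_{\csn}=\vol(G')/d_{\csn}$.

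The core step is the renewal--reward identity. Because the successive visits of the walk to $\csn$ are renewal epochs, the tours $\{(X^{(k)}_t)_t\}_{k\ge1}$ are i.i.d.; moreover the expected number of traversals of a fixed ordered edge $(i,j)$ within one tour is $\pi_ip_{ij}/\pi_{\csn}$, so the expected $f$-reward accumulated over a tour is
\[
E\!\left[\sum_{t=2}^{\xi_1}f\bigl(X^{(1)}_{t-1},X^{(1)}_t\bigr)\right]=E[\xi_1]\,E_\pi[f(Y_1,Y_2)]=\frac{\vol(G')}{d_{\csn}}\cdot\frac{2\,\mu(G')}{\vol(G')}=\frac{2\,\mu(G')}{d_{\csn}}
\]
(here it helps that edges incident to $\csn$ contribute $0$, which also makes the precise endpoint convention of a tour irrelevant). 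Averaging the i.i.d.\ tour sums over $k=1,\dots,m$ then gives $E\bigl[\tfrac{d_{\csn}}{2m}\sum_{k=1}^m\sum_{t=2}^{\xi_k}f(X^{(k)}_{t-1},X^{(k)}_t)\bigr]=\mu(G')$, and adding the deterministic term yields $E[\hat\mu(\cD_m(\csn))]=\mu(G)$. Strong consistency then follows from the strong law of large numbers applied to the i.i.d.\ per-tour sums $S_k:=\sum_{t=2}^{\xi_k}f(X^{(k)}_{t-1},X^{(k)}_t)$, which are integrable: $E[\xi_1]<\infty$ by positive recurrence, and rerunning the renewal--reward bound with $|f|$ shows $E|S_k|<\infty$.

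The step I expect to be the main obstacle is making the renewal--reward identity fully rigorous: justifying that the tour sums are genuinely i.i.d.\ and integrable and that the expected number of $(i,j)$-traversals per tour equals $\pi_ip_{ij}/\pi_{\csn}$ (equivalently, that $E[\xi_1]\,E_\pi[f(Y_1,Y_2)]$ is the expected reward per regeneration cycle). Everything else---the bookkeeping with $\vol(G')$, the $f\equiv0$ convention on super-node edges, and the SLLN---is routine once this identity is in hand; note that the connectedness of $G'$, which holds because $I_n$ meets every connected component of $G$, is precisely what guarantees positive recurrence and hence finiteness of all of the above.
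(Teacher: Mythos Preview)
Your proposal is correct and follows essentially the same route as the paper: the paper isolates the renewal--reward identity as a separate lemma (Lemma~\ref{th:avg}), proving $E\big[\sum_{t=2}^{\xi_k}f(X^{(k)}_{t-1},X^{(k)}_t)\big]=2\mu(G')/d_{\csn}$ via exactly the argument you outline (expected crossings of an edge per cycle equal stationary transition frequency times $E[\xi_k]$), and then obtains Theorem~\ref{th:unbias} by linearity and the SLLN. Your treatment is in fact slightly more careful than the paper's in explicitly checking $E|S_k|<\infty$ before invoking the SLLN.
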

Theorem~\ref{th:unbias} provides an unbiased estimate of network statistics from random walk tours.
The length of tour $k$ is short if it starts at a massive super-node as the expected tour length is inversely proportional to the degree of the super-node, $E[\xi_k] \propto 1/d_{\cS_n}$.
This provides a practical way to compute unbiased estimates of node, edge, and triangle statistics using $\hat{\mu}(\cD_m(\cS_n))$ (eq.~\eqref{e:hatmu}) while observing only a small fraction of the original graph.
Because random walk tours can have arbitrary lengths, we show in Lemma \ref{th:variance_tour_estr}, Section~\ref{s:BayesianInference}, that there are upper and lower bounds on the variance of $\hat{\mu}(\cD_m(\cS_n))$. For a bounded function $f$, the upper bounds are shown to be always finite.

In what follows we show the approximate posterior of the estimator in Theorem~\ref{th:unbias}.
In Section~\ref{s:results} we shall see that the approximate posterior matches very well the empirical posterior using simulations over real-world  networks while crawling $<10\%$ of the nodes in the network.

Let ${\mu}(G')$ be the true value $\mu(G)$ outside the subgraph formed by the nodes that were merged into the super-node.

\subsubsection*{Bayesian approximation of the posterior of $\mu(G)$}
Let
\[
\hat{F}_h = \frac{d_{\cS_n}}{2 \lfloor\sqrt{m}\rfloor} \sum_{k=((h-1)\lfloor \sqrt{m}\rfloor + 1)}^{h \lfloor\sqrt{m}\rfloor} \sum_{t=2}^{\xi_h}  f(X^{(k)}_{t-1},X^{(k)}_{t})+\sum_{\mathclap{\substack{(u,v) \in E \text{ s.t.\ } \\ u\in I_n \text{ or } v \in I_n }}}^{\vphantom{\xi_k}} f(u,v).
\]

In the scenario of Theorem~\ref{th:unbias} for $m \geq 2$ tours and assuming priors $\mu(G)| \sigma^2 \sim \text{Normal}(\mu_0,\sigma^2/m_0)$, $\sigma^2 \sim \text{Inverse-gamma}(\nu_0/2,\nu_0\sigma_0^2/2)$ ($\sigma^2$ is the variance of $\hat{F}_1$), then the marginal posterior density of $\mu(G)$ as $m \to \infty$ converges {\em in distribution} to a non-standardized $t$-distribution
\begin{equation}\label{e:post}
\phi(x | \nu,\widetilde{\mu},\widetilde{\sigma}) = \frac{\Gamma\left(\frac{\nu+1}{2}\right)}{\Gamma\left(\frac{\nu}{2}\right) \, \widetilde{\sigma} \sqrt{\pi{}\nu}} \left(1+\frac{\left(x-\widetilde{\mu} \right)^2}{\widetilde{\sigma}^ 2{}\nu}\right)^{-\frac{\nu+1}{2}}
\end{equation}
with  degrees of freedom parameter $$\nu = \nu_0 + \lfloor \sqrt{m}\rfloor,$$location parameter
$$\widetilde{\mu} = \frac{m_0 \mu_0 + \lfloor \sqrt{m} \rfloor \hat{\mu}(\cD_m(\cS_n))}{m_0 + \lfloor \sqrt{m} \rfloor},$$
and scale parameter
$$\widetilde{\sigma} = \sqrt{ \frac{\nu_0 \sigma_0^2 + \sum_{k=1}^{\lfloor \sqrt{m} \rfloor} (\hat{F}_k - \hat{\mu}(\cD_m(\cS_n)))^2 + \frac{m_0 \lfloor \sqrt{m} \rfloor (\hat{\mu}(\cD_m(\cS_n)) - \mu_0)^2}{m_0+\lfloor \sqrt{m} \rfloor}}{(\nu_0 + \lfloor \sqrt{m} \rfloor)(m_0 +\lfloor \sqrt{m} \rfloor)}}.$$

\begin{remark}
Note that approximation in \eqref{e:post} is a Bayesian approach and Theorem \ref{th:unbias} is the frequentist counterpart. In fact, the motivation to form the Bayesian approach comes from the frequentist estimator ($\hat{F}_h$ samples). From the approximate posterior, the Bayesian MAP estimator for sufficiently large values of $m$ is
\[\hat{\mu}_{\text{MAP}}=\argmax_{x} \phi(x|v,\widetilde{\mu},\widetilde{\sigma})=\widetilde{\mu}.\]
Thus when $m_0=0$, the Bayesian estimator $\hat{\mu}_{\text{MAP}}$ is essentially the first term in the frequentist estimator $\hat{\mu}(\cD_m(\cS_n))$ (second term is calculated a priori), and hence both the estimators are same. In this paper we make use of the posterior distribution from the Bayesian approach to get the degrees of belief along with the common estimator from both the approaches.
\end{remark}

The above remark shows that the approximate posterior in \eqref{e:post} provides a way to access the confidence in the estimate $\hat{\mu}(\cD_m(\cS_n))$.
The Normal prior for the average gives the largest variance given a given mean. 
The inverse-gamma is a non-informative conjugate prior if the variance of the estimator is not too small~\cite{Gelman2006}, which is generally the case in our application. Other choices of prior, such as uniform, are also possible yielding different posteriors without closed-form solutions~\cite{Gelman2006}.
The posterior is conservative as $\hat{\mu}(\cD_m(\cS_n))$ is calculated from $m$ tours while the
posterior considers only $\sqrt{m}$ tours.
Being conservative, however, is advised as the posterior is for large values of $m$ and the conservative estimate better protects us from finite-sample anomalies and perform very well in practice as we see in Section~\ref{s:results}.

In what follows we provide the proofs of our main results.

\subsection{Proof of Theorem~\ref{th:unbias}}
\label{s:proof1}

The outline of this proof is as follows.
In Lemma~\ref{th:avg} we show that the estimate of $\mu(G)$ from each tour is unbiased.

\begin{lemma}\label{th:avg}
Let $X^{(k)}_1,\ldots,X^{(k)}_{\xi_k}$ be the nodes traversed by the $k$-th random walk tour on $G'$, $k \geq 1$ starting at super-node $\cS_n$. Then the following holds, $\forall k$,
\begin{equation}
E \Big[\sum_{t=2}^{\xi_k}f(X^{(k)}_{t-1},X^{(k)}_{t}) \Big]=\frac{2}{d_{\cS_n}} {\mu}(G').
\end{equation}
\label{th:edge_fn_estn}
\end{lemma}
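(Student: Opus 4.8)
The plan is to compute the expectation by conditioning on the random walk being in steady state and using the strong Markov property at visits to the super-node. First I would recall the standard renewal/regenerative identity: for a positive recurrent Markov chain with stationary distribution $\pi$, and for any function $g$ on states (or on pairs of consecutive states), the expected sum of $g$ accumulated over one excursion from a fixed state $\cS_n$ equals $(1/\pi_{\cS_n})\, E_\pi[g]$, where $E_\pi$ is expectation in stationarity. Concretely, writing the excursion as $X_1^{(k)}=\cS_n, X_2^{(k)},\dots,X_{\xi_k}^{(k)}$ with $X_{\xi_k+1}^{(k)}=\cS_n$, the cycle formula gives
\[
E\Big[\sum_{t=2}^{\xi_k} f(X^{(k)}_{t-1},X^{(k)}_t)\Big] = \frac{1}{\pi_{\cS_n}}\, E\big[f(Y_1,Y_2)\big],
\]
where $(Y_1,Y_2)$ is a stationary consecutive pair of the random walk on $G'$ (using the notation $Y_1,Y_2,\dots$ introduced in Section~\ref{s:pre}). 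One has to be a little careful with the indexing so that each directed step of the excursion is counted exactly once and the step returning to $\cS_n$ is included; this is why the sum runs $t=2,\dots,\xi_k$ with the convention that $X^{(k)}_{\xi_k}$ is the last node before returning, or equivalently one extends the excursion to include the return step. I would state this bookkeeping explicitly.

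Next I would evaluate the two factors. For a random walk on the multigraph $G'$ with $p_{uv}=\alpha_{u,v}/d_u$, the stationary distribution is $\pi_u = d_u/(\sum_{w\in V'} d_w) = d_u/(2|E'|)$, so in particular $\pi_{\cS_n} = d_{\cS_n}/(2|E'|)$, giving $1/\pi_{\cS_n} = 2|E'|/d_{\cS_n}$. For the stationary edge term,
\[
E\big[f(Y_1,Y_2)\big] = \sum_{u\in V'} \pi_u \sum_{v} p_{uv}\, f(u,v) = \sum_{u\in V'}\frac{d_u}{2|E'|}\sum_v \frac{\alpha_{u,v}}{d_u} f(u,v) = \frac{1}{2|E'|}\sum_{(u,v)} \alpha_{u,v} f(u,v),
\]
where the last sum is over ordered pairs. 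Multiplying, the $2|E'|$ factors cancel and we get $\frac{2}{d_{\cS_n}}\cdot\frac12\sum_{(u,v)}\alpha_{u,v}f(u,v)$ over ordered pairs, i.e. $\frac{1}{d_{\cS_n}}\sum_{(u,v)\text{ ordered}}\alpha_{u,v}f(u,v)$. Since $f$ is symmetric on $G'$ and was redefined to be zero on any edge incident to $\cS_n$, the ordered sum over all pairs of $V'$ equals twice the unordered sum over $E'$ with the super-node edges contributing nothing, which is exactly $2\,\mu(G')$ by the definition of $\mu(G')$ as "the true value $\mu(G)$ outside the subgraph merged into the super-node". Hence the expectation is $\frac{2}{d_{\cS_n}}\mu(G')$, as claimed.

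The main obstacle is not any deep inequality but getting the cycle formula and the edge-counting conventions exactly right: one must be careful that (i) the excursion sum indexed by $t=2,\dots,\xi_k$ together with the return step corresponds precisely to one stationary cycle, so the regenerative identity applies without an off-by-one error, and (ii) the multigraph multiplicities $\alpha_{u,v}$ are handled consistently between the transition probabilities, the stationary distribution, and the final identification with $\mu(G')$ — in particular that merging $I_n$ into $\cS_n$ does not create or destroy edges among $V\setminus I_n$, so that the edge contributions outside the super-node are unchanged and the $f\equiv 0$ convention on super-node edges removes exactly the terms that would otherwise be miscounted. I would therefore present the regenerative identity as a cited/elementary lemma, then carry out the two short stationary computations above, and close by matching terms with the definition of $\mu(G')$.
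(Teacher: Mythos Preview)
Your proposal is correct and follows essentially the same route as the paper: both arguments exploit the regenerative structure of the walk at $\cS_n$ to equate the expected excursion sum with $E[\xi_k]\cdot E_\pi[f(Y_1,Y_2)]$ (the paper phrases this as a renewal reward theorem applied edge-by-edge, you phrase it as the cycle formula), and then plug in the explicit stationary distribution $\pi_u=d_u/\sum_j d_j$ and transition probabilities to obtain $\tfrac{2}{d_{\cS_n}}\mu(G')$. Your discussion of the off-by-one indexing and the multigraph multiplicities is more explicit than the paper's, but the substance is identical.
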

\begin{proof}
The random walk starts from the super-node $\cS_n$, thus
\begin{multline}
E\Big[\sum_{t=2}^{\xi_k}f(X^{(k)}_{t-1},X^{(k)}_t) \Big] =\\  \sum_{(u,v) \in E'} E \Big[\Big(\text{No.\ of times Markov chain crosses $(u,v)$ in the tour}\Big) f(u,v)\Big]
\label{eq:proof_tour_edge}
\end{multline}
Consider a renewal reward process with inter-renewal time distributed as $\xi_k, k \geq 1$ and reward as the number of times Markov chain crosses $(u,v)$. From renewal reward theorem,
\begin{multline*}
\{ \text{Asymptotic frequency of transitions from $u$ to $v$}\}  \\
= \frac{E \Big[\Big(\text{No.\ of times Markov chain crosses $(u,v)$ in the tour}\Big) f(u,v)\Big]}{E[\xi_k]}
\end{multline*}
Here the left-hand side is essentially $2\pi_u p_{uv}$. Now \eqref{eq:proof_tour_edge} becomes
\begin{align*}
E \Big[\sum_{t=2}^{\xi_k}f(X^{(k)}_{t-1},X^{(k)}_t) \Big] &= \sum_{(u,v) \in E'} f(u,v)\, 2\pi_u \, p_{uv}\, E[\xi_k] \\
& = 2\sum_{(u,v) \in E'} f(u,v) \frac{d_u}{\sum_j d_j}\, \frac{1}{d_u}\, \frac{\sum_j d_j}{d_\csn}=\frac{2}{d_\csn} \sum_{(u,v) \in E'} f(u,v),
\end{align*}
which concludes our proof. 
\end{proof}
In what follows we prove Theorem~\ref{th:unbias} using Lemma~\ref{th:avg}.
\begin{proof}[Theorem~\ref{th:unbias}]
By Lemma \ref{th:avg} the estimator $W_k = \displaystyle \sum_{t=2}^{\xi_k-1}f(X^{(k)}_{t-1},X^{(k)}_t)$ is an unbiased estimate of $\mu(G')$. By the linearity of expectation the average estimator $\bar{W}(m)=m^{-1} \sum_{k=1}^{m} W_k$ is also unbiased.
Finally for the estimator
$$\hat{\mu}(\cD_m(\cS_n)) = \frac{d_{\cS_n}}{2m} \bar{W}(m) + \sum_{(u,v) \in E \text{ s.t.\ } u,v \in I_n}  f(u,v)$$
has average
\[
E[\hat{\mu}(\cD_m(\cS_n))] = \sum_{(u,v) \in E \text{ s.t.\ } u\not\in I_n \text{ or }v \not\in I_n}  f(u,v) + \sum_{(u,v) \in E \text{ s.t.\ } u,v \in I_n}  f(u,v) = \mu(G).
\]
Furthermore, by strong law of large numbers with $E[W_k]< \infty$, $\hat{\mu}(\cD_m(\cS_n)) \to \mu(G)$ a.s.\ for $m\to\infty$. This completes our proof.
\end{proof}


\subsection{Derivation of the approximate posterior}
\label{s:BayesianInference}
The derivation of \eqref{e:post} relies first on showing that $\hat{\mu}(\cD_m(\cS_n))$ has finite first and second moments.
We go further and in Lemma~\ref{th:variance_tour_estr} we introduce upper and lower bounds on the variance of $\hat{\mu}(\cD_m(\cS_n))$.
By Theorem~\ref{th:unbias} the first moment of $\hat{\mu}(\cD_m(\cS_n))$ is finite as $E[\hat{\mu}(\cD_m(\cS_n))] = \mu(G)$.
To show that the second moment is finite we prove that the estimate $W_k = \displaystyle \sum_{t=2}^{\xi_k-1}f(X^{(k)}_{t-1},X^{(k)}_t)$, $k \geq 1$,  whose variance is $m \cdot \text{Var}(\hat{\mu}(\cD_m(\cS_n)))$, has finite second moment.
The results in Lemma~\ref{th:variance_tour_estr} are of interest on their own because they establish a connection between
the estimator variance and the spectral gap.

\subsubsection{Impact of spectral gap on variance}
\label{s:var}
Let $\mathbf{S}=\bD^{1/2}\bP \bD^{-1/2}$, where $\bP$ is the random walk transition probability matrix as defined in Section~\ref{s:pre} and $\bD=\text{diag}(d_1,d_2,\ldots,d_{|V'|})$ is a diagonal matrix with the node degrees of $G'$. The eigenvalues $\{\lambda_i \}$ of $\bP$ and $\mathbf{S}$ are same and $1=\lambda_1> \lambda_2 \ge \ldots \ge \lambda_{|V'|} \geq -1$. Let $j$th eigenvector of $\mathbf{S}$ be $(w_{ji}),1\leq i \leq |V|$. Let $\delta$ be the spectral gap, $\delta:=1-\lambda_2$. Let the left and right eigenvectors of $\bP$ be $v_j$ and $u_j$ respectively. $d_{tot}:=\sum_{v \in V'} d_v$. Define $\langle f,g \rangle_{\hat{\pi}}=\sum_{(u,v) \in E'} \hat{\pi}_{uv} f(u,v) g(u,v)$, with $\hat{\pi}_{uv}=\pi_u p_{uv}$, and matrix $\bP^*$ with $(j,i)$th element as $p^*_{ji}=p_{ji} f(j,i)$. Also let $\hat{f}$ be the vector with $\hat{f}(j)=\sum_{i \in V'} p^*_{ji}$.
\begin{lemma}
\label{th:variance_tour_estr}
The following holds
\begin{enumerate}
\item[(i).] Assuming the function $f$ is bounded, $\displaystyle \max_{(i,j) \in E'} f(i,j) \leq B<\infty$, $B>0$ and for tour $k \geq 1$,
\begin{align*}
\allowdisplaybreaks[4]
\lefteqn{\text{\normalfont var}\left[ \sum_{t=2}^{\xi_k}f(X^{(k)}_{t-1},X^{(k)}_t) \right] } \nonumber \\
& \le  \frac{1}{d_\csn^2} \left(2d_{\text{tot}}^2B^2 \sum_{i \ge 2} \frac{w_{\csn i}^2}{ ( 1-\lambda_{i})}  -4 \mu^2(G_\csn) \right)-\frac{1}{d_\csn}B^2d_{\text{tot}}+B^2 \\
& <  B^2 \left(\frac{2 d_{\text{tot}}^2}{d_\csn^2 \delta}+1 \right).
\end{align*}
Moreover,
\[E\left[ \left(\sum_{t=2}^{\xi_k}f(X^{(k)}_{t-1},X^{(k)}_t) \right)^l \,\right] <\infty\quad \forall l \geq 0.\]
\item[(ii).]
\begin{align}
\allowdisplaybreaks[4]
\lefteqn{\text{\normalfont var}_{\cS_n} \left[ \sum_{t=2}^{\xi_k}f(X^{(k)}_{t-1},X^{(k)}_t)) \right] } \nonumber \\
& \geq   2\frac{d_{tot}}{d_\csn} \sum_{i=2}^r \frac{\lambda_i}{1-\lambda_i} \langle f, v_i \rangle_{\hat{\pi}} \,(u_i^\intercal \hat{f})+ \frac{1}{d_{\csn}} \sum_{(u,v) \in E'} f(u,v)^2 + \frac{1}{d_{tot}d_\csn} \left(\sum_{(u,v) \in E'} f(u,v)^2\right)^2 \nonumber \\
&  \qquad +\frac{1}{d_{tot} d_\csn} \sum_{u \in V'} d_u \Big(\sum_{u \sim v} f(u,v)\Big)^2-\frac{4}{d_\csn^2}\left(\sum_{(u,v) \in E'} f(u,v) \right)^2 \nonumber \\
&  \qquad -\frac{8}{d_{tot}} \left(\sum_{(u,v) \in E'} f(u,v) \right)^2 \sum_{i \ge 2} \frac{w_{\csn i}^2}{ ( 1-\lambda_{i})}-\frac{4}{d_{tot} d_\csn} \left(\sum_{(u,v) \in E'} f(u,v) \right)^2.
\label{eq:th_var_tour_estr_ii}
\end{align}
\end{enumerate}
\end{lemma}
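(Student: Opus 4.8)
The plan is to compute the second moment $E\bigl[(\sum_{t=2}^{\xi_k}f(X^{(k)}_{t-1},X^{(k)}_t))^2\bigr]$ exactly via a spectral expansion, and then combine it with the first moment from Lemma~\ref{th:avg} to obtain the variance; the upper and lower bounds then follow by retaining or dropping the appropriate eigenvalue terms. First I would write the tour sum $\sum_{t=2}^{\xi_k} f(X^{(k)}_{t-1},X^{(k)}_t)$ as a functional of the trajectory of the random walk started at $\cS_n$ and killed on return to $\cS_n$. Expanding the square produces a diagonal part $\sum_t f(X_{t-1},X_t)^2$ and a cross part $2\sum_{s<t} f(X_{s-1},X_s) f(X_{t-1},X_t)$. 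For the diagonal part, the same renewal-reward argument used in Lemma~\ref{th:avg} (applied with $f^2$ in place of $f$) gives $E[\sum_t f(X_{t-1},X_t)^2] = \frac{2}{d_\csn}\sum_{(u,v)\in E'} f(u,v)^2$. For the cross part I would condition on the successive states and use the Markov property: the contribution of a pair of transitions at times $s$ and $t$ with $t-s = \ell$ involves the $\ell$-step transition matrix $\bP^\ell$ sandwiched between the ``edge-weighted'' vectors, i.e.\ terms of the form $\hat f^\intercal \bP^{\ell-1} (\text{something})$ summed over $\ell \ge 1$ together with a combinatorial factor counting how many such pairs fit inside a tour of a given length.

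The key device is the Green's function / fundamental matrix: summing $\bP^\ell$ over the tour length against the tour-length distribution turns the geometric-type sums into $(\bI - \bP + \bone\pi^\intercal)^{-1}$-style expressions, which after the similarity transform $\mathbf{S} = \bD^{1/2}\bP\bD^{-1/2}$ diagonalize into $\sum_{i\ge2} \frac{\lambda_i}{1-\lambda_i}$ (or $\frac{1}{1-\lambda_i}$) weighted by the overlaps $w_{\csn i}$ of the eigenvectors with the super-node coordinate, respectively by $\langle f, v_i\rangle_{\hat\pi}(u_i^\intercal \hat f)$ for the edge-weighted quantities. Concretely, I expect the exact second moment to be a sum of: (a) the diagonal term $\frac{2}{d_\csn}\sum f(u,v)^2$; (b) a ``degree-weighted out-sum'' term $\frac{1}{d_{tot}d_\csn}\sum_u d_u(\sum_{u\sim v} f(u,v))^2$ and a $\bigl(\sum f(u,v)^2\bigr)^2$ term coming from the $\ell=1$ / boundary contributions; (c) the genuinely spectral term $\frac{2 d_{tot}}{d_\csn}\sum_{i=2}^r \frac{\lambda_i}{1-\lambda_i}\langle f,v_i\rangle_{\hat\pi}(u_i^\intercal\hat f)$; minus (d) the square of the first moment, $\frac{4}{d_\csn^2}(\sum f(u,v))^2$, and lower-order cross terms involving $\sum_{i\ge2} w_{\csn i}^2/(1-\lambda_i)$ that arise because the tour length $\xi_k$ itself is random and correlated with the reward. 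For part (i) I would bound $f$ by $B$ everywhere, bound $\langle\cdot,\cdot\rangle$ and the eigenvector overlaps crudely, use $\sum_i w_{\csn i}^2 = 1$ and $1-\lambda_i \ge \delta$ for $i\ge2$ to pull out $\frac{1}{\delta}$, and arrive at $B^2(2d_{tot}^2/(d_\csn^2\delta)+1)$; finiteness of all moments $E[(\sum_t f)^l]<\infty$ follows because $|\sum_t f| \le B\,\xi_k$ and $\xi_k$ has geometrically decaying tails (return time to a state in a finite irreducible chain), so all its moments are finite. For part (ii) I would simply drop the non-negative terms I do not want and keep the stated ones, being careful about signs: the terms subtracted in \eqref{eq:th_var_tour_estr_ii} are exactly those I must keep (they are the negative ones), while the ones I discard must be verified non-negative.

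The main obstacle is the cross-term bookkeeping: because the reward (number of edge crossings) and the tour length $\xi_k$ are dependent, one cannot naively factor the double sum as $E[\xi_k]^2$ times a per-step quantity; instead one must carefully sum $\sum_{\ell\ge1}\bP^{\ell}$ weighted by $P(\xi_k > s+\ell \mid X_s = \cdot)$ and $P(\xi_k > s)$, keeping track of the ``absorbing'' structure at $\cS_n$. A clean way to organize this is to use the standard identity expressing $E_{\cS_n}[(\sum_{t} g(X_{t-1},X_t))^2]$ for a tour in terms of the deviation matrix $Z = (\bI - \bP + \bone\pi^\intercal)^{-1}$ (or equivalently the fundamental matrix), which is what produces the $\lambda_i/(1-\lambda_i)$ weights after diagonalization; the only subtlety is that $g$ here lives on edges rather than nodes, so one works with $\hat f(j) = \sum_i p_{ji} f(j,i)$ and the bilinear forms $\langle f, v_i\rangle_{\hat\pi}$. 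Once the exact identity is in hand, the inequalities in (i) and (ii) are routine term-by-term estimates using $\|w_{\csn\cdot}\|_2 = 1$, $1-\lambda_i \ge \delta$, $|\lambda_i| \le 1$, and $|f| \le B$.
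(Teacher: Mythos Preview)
Your plan differs from the paper's in both parts, and in part~(ii) the difference is substantive enough that your route has a real obstacle the paper simply avoids.

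For part~(i) the paper does \emph{not} compute any spectral expansion of the second moment. It uses the one-line crude bound
\[
\text{var}\!\left[\sum_{t=2}^{\xi_k} f(X^{(k)}_{t-1},X^{(k)}_t)\right]
\;\le\; B^2\,E[(\xi_k-1)^2]\;-\;\left(E\!\left[\sum_{t=2}^{\xi_k} f\right]\right)^2,
\]
and then plugs in the known closed form $E[\xi_k^2]=\bigl(2\sum_{i\ge2} w_{\cS_n i}^2(1-\lambda_i)^{-1}+1\bigr)/\pi_{\cS_n}^2$ from Aldous--Fill together with Lemma~\ref{th:avg}. That already produces both displayed inequalities. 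Your spectral computation would work too, but it is much more labor for the same conclusion. For finiteness of all moments the paper argues exactly as you do: $|\sum_t f|\le B(\xi_k-1)$ and $\xi_k$ has an exponential moment.

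For part~(ii) the paper takes an \emph{indirect} route that entirely sidesteps the absorbing-at-$\cS_n$ bookkeeping you flag as the main obstacle. It invokes two central limit theorems: the CLT for the \emph{stationary} ergodic average $\bar f_n = n^{-1}\sum_{t=2}^n f(Y_{t-1},Y_t)$ with asymptotic variance $\sigma_a^2$, and the renewal--reward CLT for the tour estimator with asymptotic variance $\sigma_b^2=\nu^2/E[\xi_k]$, where
\[
\nu^2=\text{var}(W_k)+(E[W_k])^2+\Bigl(\tfrac{E[W_k]}{E[\xi_k]}\Bigr)^2 E[\xi_k^2]-2\,\tfrac{E[W_k]}{E[\xi_k]}\,E[W_k\xi_k].
\]
Because the two limits coincide (Meyn--Tweedie), one gets $\sigma_a^2=\sigma_b^2$ and can solve for $\text{var}(W_k)$. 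The quantity $\sigma_a^2$ is computed in a separate lemma (Lemma~\ref{lem:variance_asym}) purely for the \emph{stationary} chain, where the fundamental matrix $Z=(\bI-\bP+\bone\pi^\intercal)^{-1}$ appears naturally and diagonalizes into the $\lambda_i/(1-\lambda_i)$ terms. No absorbing structure enters that computation at all; the $\sum_{i\ge2} w_{\cS_n i}^2/(1-\lambda_i)$ terms in \eqref{eq:th_var_tour_estr_ii} come only from the $E[\xi_k^2]$ factor when you unpack $\nu^2$.

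The gap in your plan is the sentence ``A clean way to organize this is to use the standard identity expressing $E_{\cS_n}[(\sum_t g(X_{t-1},X_t))^2]$ for a tour in terms of the deviation matrix $Z=(\bI-\bP+\bone\pi^\intercal)^{-1}$.'' That identity is standard for the \emph{stationary} partial sums, not for a tour started and killed at $\cS_n$; the tour second moment is governed by the Green's function of the substochastic matrix obtained by deleting the $\cS_n$ row and column, whose spectrum is not that of $\bP$. Converting from one to the other is exactly the content of the $\sigma_a^2=\sigma_b^2$ equivalence the paper invokes. So either you would end up re-deriving that equivalence, or you would need a separate (and less standard) spectral identity for the killed chain. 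Your direct approach is not wrong in principle, but the step you label ``clean'' is precisely where the work hides.
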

\begin{proof}
(i). The variance of the estimator at tour $k \geq 1$ starting from node $\cS_n$ is
\begin{eqnarray}
{\text{var}_{\cS_n} \left[\sum_{t=2}^{\xi_k}f(X^{(k)}_{t-1},X^{(k)}_t)) \right] }\le  B^2 E [({\xi_k-1})^2]-\left( E \left[\sum_{t=2}^{\xi_k}f(X^{(k)}_{t-1},X^{(k)}_t)) \right]  \right)^2.
\label{eq:proof:var_i_1}
\end{eqnarray}
It is known from \cite[Chapter 2 and 3]{Aldous_2014} that
\begin{eqnarray*}
E [\xi_k^2] & = &\frac{2 {\sum_{i \ge 2} w_{\cS_n i}^2 ( 1-\lambda_{i})^{-1}}+1}{{\pi_{\cS_n}^2}}.
\end{eqnarray*}
Using Theorem~\ref{th:edge_fn_estn} eq.~\eqref{eq:proof:var_i_1} can be written as
\begin{eqnarray*}
\lefteqn{{\text{var} \left[\sum_{t=2}^{\xi_k}f(X^{(k)}_{t-1},X^{(k)}_t)) \right] }} \nonumber \\
&\leq &\frac{1}{d_{\cS_n }^2} \left(2d_{\text{tot}}^2B^2 ( \sum_{i \ge 2} w_{\cS_n  m}^2 ( 1-\lambda_{i})^{-1} ) -4 \mu^2(G') \right)-\frac{1}{d_{\cS_n }}B^2d_{\text{tot}}+B^2.
\end{eqnarray*}
The latter can be upper-bounded by $B^2(2 d_{\text{tot}}^2/(d_i^2 \delta)+1$).

For the second part, we have
\begin{align*}
E \left[ \left(\sum_{t=2}^{\xi_k}f(X^{(k)}_{t-1},X^{(k)}_t) \right)^l \,\right]
& \leq B^l E[({\xi_k-1})^l)] \leq C (E[(\xi_k)^l]+1),
\end{align*}
for a constant $C>0$ using $c_r$ inequality. From \cite{meyn2012markov}, it is known that there exists an $a>0$, such that $E[\exp(a\, \xi_k)]<\infty$, and this implies that $E[(\xi_k)^l]<\infty$ for all $l \geq 0$. This proves the theorem.

(ii). 
We denote $E_\pi f$ for $E_\pi[ f(Y_1,Y_2)]$ and $\text{Normal}(a,b)$ indicates Gaussian distribution with mean $a$ and variance $b$. With the trivial extension of the central limit theorem of Markov chains \cite{Kipnis1986} of node functions to edge functions, we have for the ergodic estimator $\bar{f}_n=n^{-1} \sum_{t=2}^{n} f(Y_{t-1},Y_t)$,
\begin{equation}
\sqrt{n} (\bar{f}_n-E_\pi f) \xrightarrow{d} \text{Normal}(0,\sigma^2_a), 
\label{eq:proof:var_ii_1}
\end{equation}
where
\begin{equation}
\sigma^2_{a}= \text{Var}(f(Y_{1},Y_2))+2 \sum_{l=2}^{n-1} \frac{(n-1)-l}{n}\, \text{Cov}(f(Y_{0},Y_1),f(Y_{l-1},Y_l)) <\infty \nonumber
\end{equation}
We derive $\sigma^2_a$ in Lemma \ref{lem:variance_asym}. Note that $\sigma^2_a$ is also the asymptotic variance of the ergodic estimator of edge functions.

Consider a renewal reward process at its $k$-th renewal, $k \geq 1$, with inter-renewal time $\xi_k$ and reward $W_k = \displaystyle \sum_{t=2}^{\xi_k}f(X^{(k)}_{t-1},X^{(k)}_t)$. Let $\bar{W}(n)$  be the average cumulative reward gained up to $m$-th renewal, i.e., $\bar{W}(m)=m^{-1} \sum_{k=1}^{m} W_k$. From the central limit theorem for the renewal reward process \cite[Theorem 2.2.5]{Tijms2003} after $n$ total number of steps, with $l_n = \arg\!\max_k \sum_{j=1}^k {\bf 1}(\xi_j \leq n)$, yields
\begin{equation}
\sqrt{n}(\bar{W}(l_n)-E_\pi f) \xrightarrow{d} \text{Normal} (0,\sigma_b^2),
\label{eq:proof:var_ii_2}
\end{equation}
with $\displaystyle \sigma_b^2=\frac{\nu^2}{E[\xi_k]} $ and
\begin{eqnarray}
\nu^2 &= & E[(W_k-\xi_k E_{\pi}f)^2]= E_i\left[\left(W_k-\xi_k\,\frac{E[W_k]}{E[\xi_k]} \right)^2\right] \nonumber \\
&=&\text{var}_{\cS_n}(W_k)+(E[W_k])^2+\left(\frac{E[W_k]}{E[\xi_k]} \right)^2 E[(\xi_k)^2]-2\frac{E[W_k]}{E[\xi_k]}E[W_k \xi_k]. \nonumber
\end{eqnarray}
In fact it can be shown that (see \cite[Proof of Theorem 17.2.2]{meyn2012markov})
\[|\sqrt{n} (\bar{f}_n-E_\pi f)-\sqrt{n}(\bar{W}(l_n)-E_\pi f)| \to 0 \quad \text{ a.s.} \, . \]
Therefore $\sigma^2_a=\sigma^2_b$. Combing this result with Lemma \ref{lem:variance_asym} shown in the appendix we get \eqref{eq:th_var_tour_estr_ii}. 
\end{proof}
We are now ready to derive the approximation \eqref{e:post}.
%
\begin{proof}
Let $m' = \lfloor \sqrt{m} \rfloor$.
Given
\[
\hat{F}_h = \frac{d_{\cS_n}}{2 \lfloor\sqrt{m}\rfloor} \sum_{k=((h-1)\lfloor \sqrt{m}\rfloor + 1)}^{h \lfloor\sqrt{m}\rfloor} \sum_{t=2}^{\xi_h}  f(X^{(k)}_{t-1},X^{(k)}_{t})  \, .
\]
and $\{\hat{F}_h\}_{h=1}^{m'}$ and because the tours are i.i.d.\,
$\hat{\mu}(\cD_{\lfloor \sqrt{m} \rfloor}(S_n))$
the marginal posterior density of $\mu$ is
\[
P[\mu |\{\hat{F}_h\}_{h=1}^{m'}] = \int_0^\infty P[\mu|\sigma^2,\{\hat{F}_h\}_{h=1}^{m'}] P[\sigma^2 | \{\hat{F}_h\}_{h=1}^{m'}]  d\sigma^2 \, .
\]
For now assume that $\{\hat{F}_h\}_{h=1}^{m'}$ are i.i.d.\ normally distributed random variables, and let $$\hat{\sigma}_{m'} = \sum_{h=1}^{m'} (\hat{F}_h - \hat{\mu}(\cD_{m'}(S_n)))^2,$$ then~\cite[Proposition C.4]{Jackman2009}
\begin{align*}
\mu | \sigma^2,\{\hat{F}_h\}_{h=1}^{m'} & \sim \text{Normal}\left( \frac{m_0 \mu_0 +  \sum_{h=1}^{m'}\hat{F}_h }{m0+m'} , \frac{\sigma^2}{m_0 + m'} \right) \, , \\
 \sigma^2 | \{\hat{F}_h\}_{h=1}^{m'} & \sim \text{Inverse-Gamma}\left( \frac{\nu_0 + m'}{2} ,  \frac{\nu_0 \sigma_0^2 + \hat{\sigma}_{m'} + \frac{m_0 m'}{m_0 + m'} (\mu_0 - \hat{\mu}(\cD_m(S_n)))^2 }{2} \right)
\end{align*}
are the posteriors of parameters $\mu$ and $\sigma^2$, respectively.
The non-standardized $t$-distribution can be seen as a mixture of normal distributions with equal mean and random variance inverse-gamma distributed~\cite[Proposition C.6]{Jackman2009}.
Thus, if $\{\hat{F}_h\}_{h=1}^{m'}$ are i.i.d.\  normally distributed then the posterior of $\hat{\mu}(\cD_{\lfloor \sqrt{m} \rfloor}(S_n))$ is a
non-standardized $t$-distributed with parameters
\begin{align}
t\Bigg(&\mu = \frac{m_0 \mu_0 +  \sum_{h=1}^{m'}\hat{F}_h }{m0+m'}, \sigma^2 = \frac{\nu_0 \sigma_0^2 + \sum_{k=1}^{\lfloor \sqrt{m} \rfloor} (\hat{F}_k - \hat{\mu}(\cD_m(\cS_n)))^2 + \frac{m_0 \lfloor \sqrt{m} \rfloor (\hat{\mu}(\cD_m(\cS_n)) - \mu_0)^2}{m_0+\lfloor \sqrt{m} \rfloor}}{(\nu_0 + \lfloor \sqrt{m} \rfloor)(m_0 +\lfloor \sqrt{m} \rfloor)},\nonumber \\
&\nu =  \nu_0 + \lfloor \sqrt{m} \rfloor \Bigg).
\end{align}
Left to show is that $\{\hat{F}_h\}_{h=1}^{m'}$ are converge  {\em in distribution} to i.i.d.\  normal random variables as $m \to \infty$.
As the spectral gap of $G_{\cS_n}$ is greater than zero, $|\lambda_1 - \lambda_2| > 0$, Lemma~\ref{th:variance_tour_estr} showns that for
$W_k = \displaystyle \sum_{t=2}^{\xi_k-1}f(X^{(k)}_{t-1},X^{(k)}_t)$ then
 \[ \sigma_W^2 = \text{\normalfont Var} ( W_k ) < \infty \, ,  \quad \forall k \, .\]
From the renewal theorem we know that $\{W_k\}_{k=1}^m$ are i.i.d.\ random variables and thus any subset of these variables is also i.i.d..
By construction $\hat{F}_1, \ldots, \hat{F}_{m'}$ are also i.i.d.\ with mean $\mu(G_{\cS_n})$ and finite variance $0 < \sigma_{m'}^2 < \infty$.
Applying the Lindeberg-L\'evy central limit theorem~\cite[Section~17.4]{cramer1999mathematical} yields
\[
 \sqrt{m'} (\hat{F}_h - \mu(G'))  \stackrel{d}{\to} \text{Normal}(0,\sigma_W^2), \quad \forall h \, ,
\]
where $\text{var}(\hat{F}_h) = \sigma_{m'}^2$.
Thus, in the limit as $m \to \infty$ and $m' \to \infty$ (recall that $m' = \lfloor \sqrt{m} \rfloor $),
the variables $\{\hat{F}_h\}_{h=1}^{m'}$ are i.i.d.\ normally distributed with
\[
\hat{F}_h \sim \text{Normal}(\mu(G'),\sigma_W^2/m') \,, \quad \forall h \: ,
\]
and
$$
\sum_{\mathclap{\substack{(u,v) \in E \text{ s.t.\ } \\ u\in I_n \text{ or } v \in I_n }}}^{\vphantom{\xi_k}} f(u,v)
$$
is constant and known, which concludes our proof.

\end{proof}

\section{Illustrative example}
\label{s:illustrative_examples}
Here we consider the classical example of low-conductance graph: the {\em dumbbell} graph.
Here we Illustrate how the super-node solves the {\em variance} problem for random walk tours on graphs. 
A dumbbell graph consists of two complete graphs $K_{n}$ on $n$ vertices connected by a single edge. 
The spectral gap $\delta = (1-\lambda_2)$, where $\lambda_2$ is the second largest absolute eigenvalue of $\bP$, is roughly $\delta = O(1/n^2)$.

It is known that the variance of the return time $\xi_k$ of tour $k > 0$ is related to $\delta$ as \cite{Aldous_2014}
\[\text{Var}(\xi_k) \leq \frac{2}{\delta\, \pi_{\cS_n}^2}+\frac{1}{\pi_{\cS_n}}.\]
Drawing nodes from both components to create the super-node, the new graph $G'$ with the super-node will be more connected and hence $\delta$ improves, and so does the variance.

Another way to view the impact of forming the super-node is that of the cover time of a random walk on dumbell graph.
Without the super-node the cover time is $\Theta(n^2)$. 
If $k=O(\log n)$ random walks run in parallel with some conditions on distributing them, the covering time can be reduced to $O(n)$ \cite{Alon2011}. 
In this view the super-node tours acts as multiple parallel random walks that quickly cover more of graph with less effort.

\section{Experiments on Real-world Networks}
\label{s:results}
In this section we demonstrate the effectiveness of the theory developed above with the experiments on real data sets of various social networks \footnote{The developed software is available here: \url{http://www-sop.inria.fr/members/Jithin.Sreedharan/HypRW.zip}}. We assume the contribution from super-node to the true value is known a priori and hence we look for $\mu(G_{\cS_n})$ in the experiments. In the case that the edges of the super-node are unknown, the estimation problem is easier and can be taken care separately. One option is to start multiple random walks in the graph and form connected subgraphs. Later, in order to estimate the bias created by this subgraph, do some random walk tours from the largest degree node in each of these sub graph and use the idea in Theorem \ref{th:edge_fn_estn}.

In the figures we display both approximate posterior and empirical posterior generated from $\hat{F}_h$. 
For the approximate posterior, we have used the following parameters $m_0=0,\nu_0=0,\mu_0=0,\sigma_0=1$. The green line in the plots shows the actual value $\mu(G_{\cS_n})$.

In the numerical experiments, the super-node is formed in one of following ways: a) uniformly sample $k$ nodes from the network without replacement; b) run random walk crawl starting from any node and cover around $10\%$ of the graph, and form the super-node with the $k$ largest degree nodes. In both the ways, if the network is disconnected, super-node should be initially created with at least one node from each of the connected component.
\subsection{Friendster}
First we study a network of moderate size, a connected subgraph of Friendster network with $64,\!600$ nodes and $1,\!246,\!479$ edges  (data publicly available at the SNAP repository~\cite{Snapdata}). Friendster is an online social networking website where nodes are individuals and edges indicate friendship. Here, we consider two types of functions:
\begin{enumerate}
\item $f_1=d_{X_t}.d_{X_{t+1}}$
\item $f_2=\begin{cases}1 \text{ if } d_{X_t} + d_{X_{t+1}} > 50 \\
0 \text{ otherwise}
\end{cases}$
\end{enumerate}
These functions reflect assortative nature of the network.
The super-node is formed from $10,\!000$ uniformly sampled nodes just as a way to test our method. Figures \ref{fig:fn_1_2} and \ref{fig:fn_2_6} display the results for functions $f_1$ and $f_2$, respectively. A good match between the approximate and empirical posteriors can be observed from the figures. Moreover the true value $\mu(G_{\cS_n})$ is also fitting well with the plots. The percentage of graph crawled is $24.44 \% $ in terms of edges and this drops to $7.43 \%$ if we use random walk based super-node formation.

\begin{figure}
\hspace{0 cm}
\centering
\includegraphics[scale=0.4]{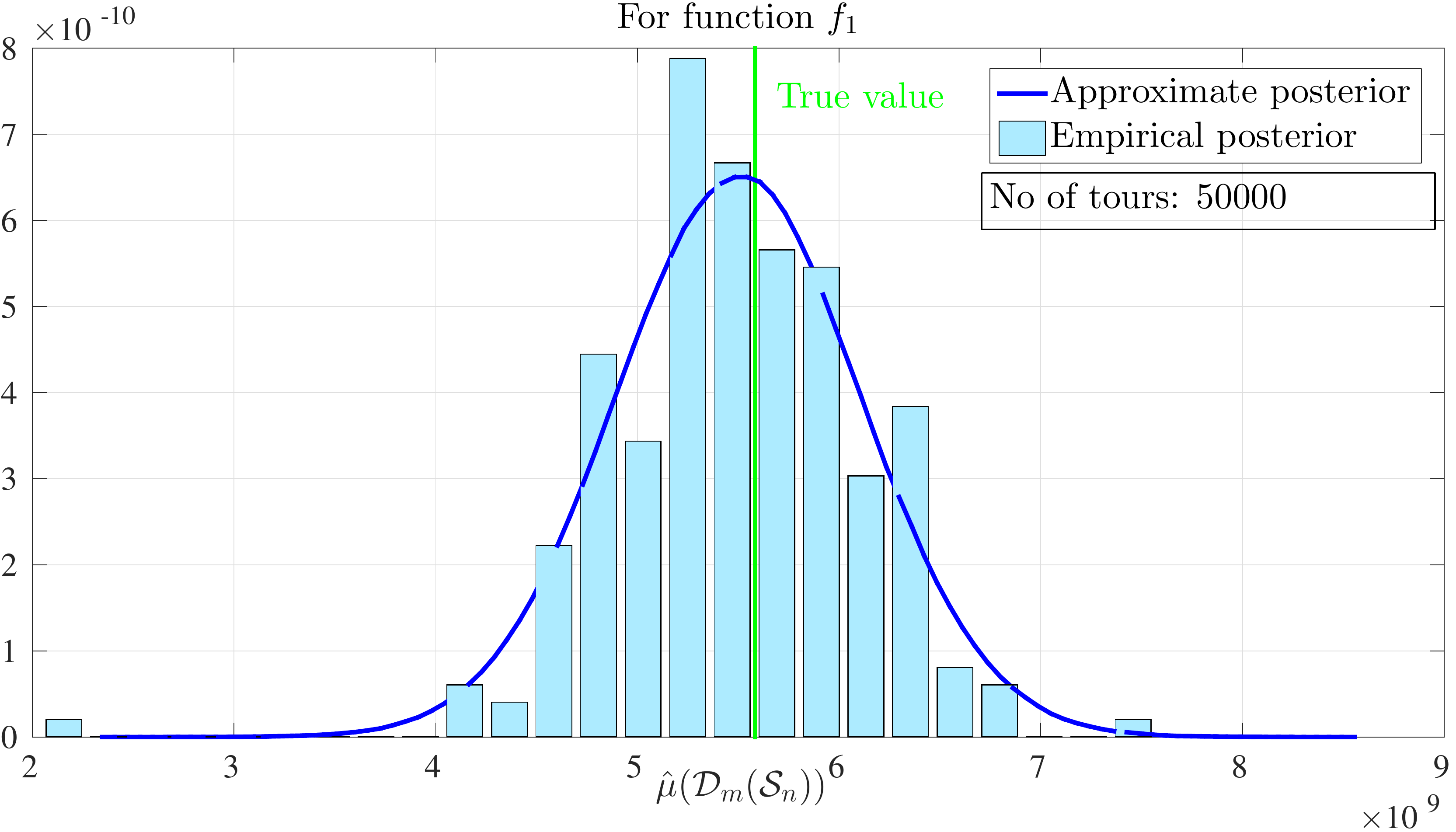}
\caption{Friendster subgraph, function $f_1$ }
\label{fig:fn_1_2}
\end{figure}

\begin{figure}
\hspace{0 cm}
\centering
\includegraphics[scale=0.4]{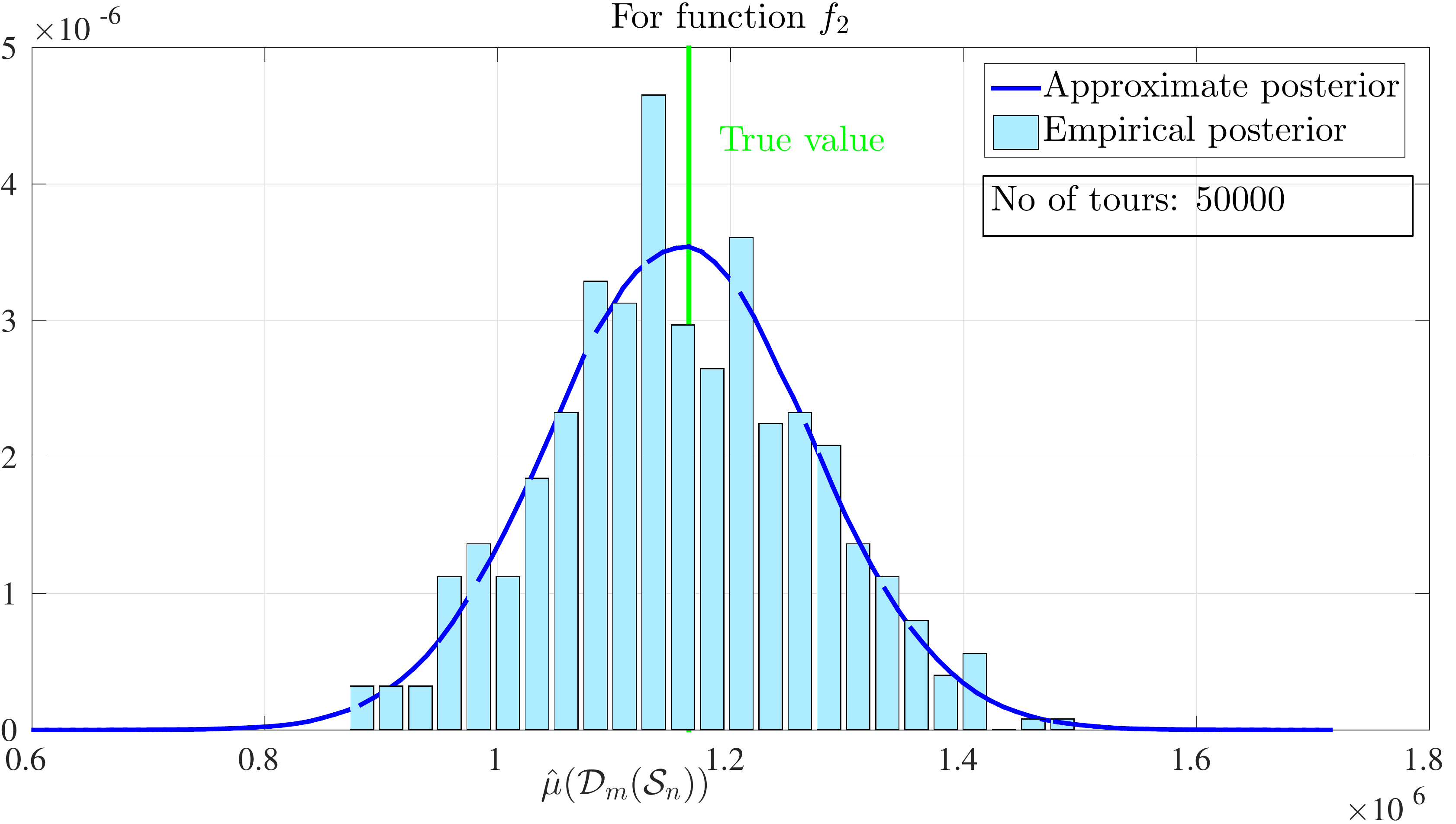}
\caption{Friendster subgraph, function $f_2$}
\label{fig:fn_2_6}
\end{figure}

\subsection{Dogster network}
The aim of this example is to check whether there is any affinity for making connections between the owners of same breed dogs \cite{dogsdata}. The network data is based on the social networking website Dogster. Each user (node) indicates the dog breed; the friendships between dogs form the edges. Number of nodes is $415,\!431$ and number of edges is $8,\!265,\!511$.

In Figure \ref{fig:dog_pals}, two cases are plotted. Function $f_1$ counts the number of connections with different breeds as pals and function $f_2$ counts connections between same breeds. The super-node is formed by $10,\!000$ nodes which are uniformly selected at random. The percentage of the graph crawled in terms of edges is $5.02\%$ and  in terms of nodes is $37.17\%$. While using the random walk based super-node formation, the graph crawled drops to $2.72 \%$  (in terms of edges) and $14.86 \%$ (in terms of nodes) with the same super-node size.  But these values can be reduced much further if we allow a bit less precision in the match between approximate distribution and histogram.
\begin{figure}
\hspace{0 cm}
\centering
\includegraphics[scale=0.4]{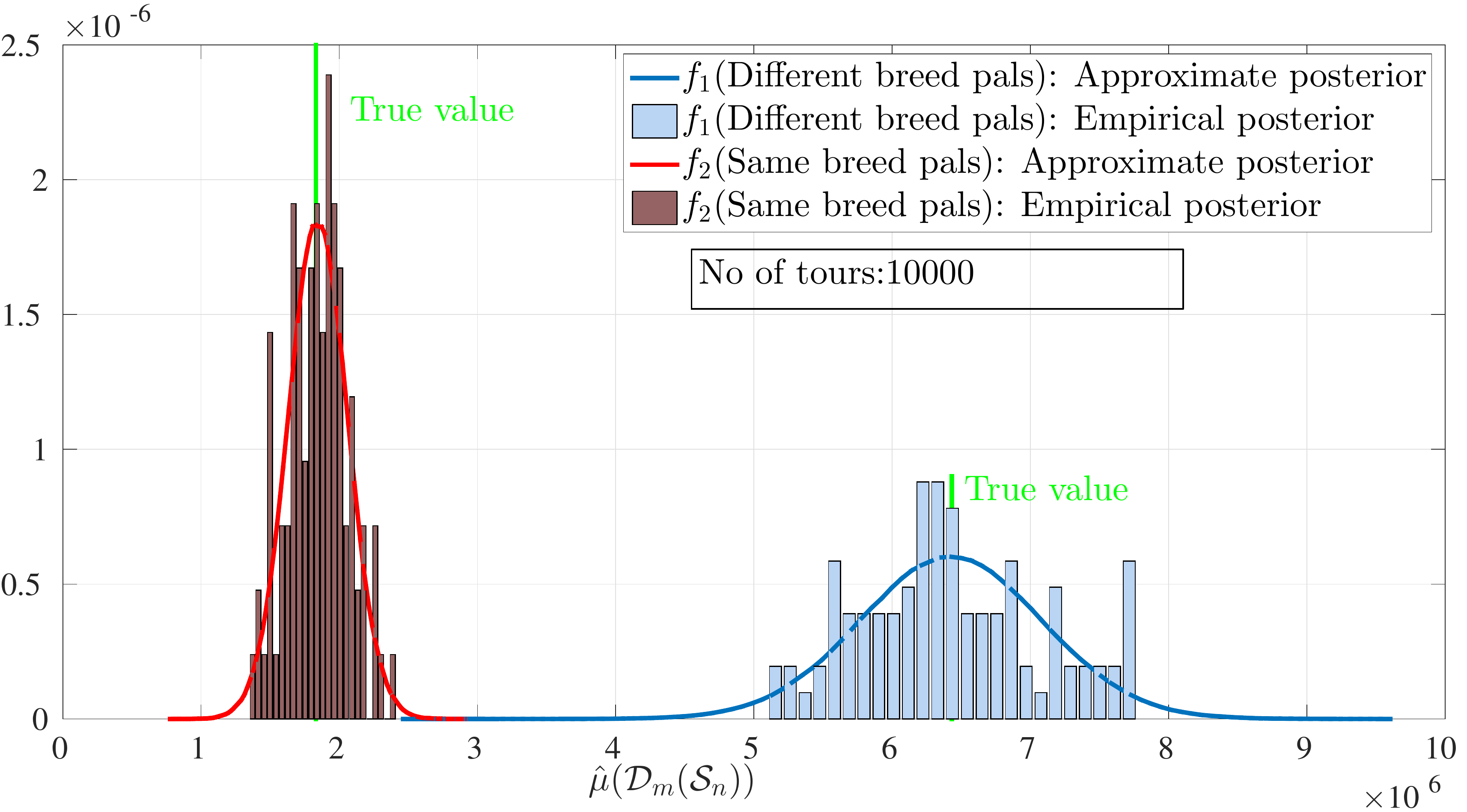}
\caption{Dog pals network}
\label{fig:dog_pals}
\end{figure}

In order to better understand the correlation in forming edges, we now consider the \textit{configuration model}. The configuration model is formed as follows: all the edges in the graph is cut and what is left is half edges for each node, and the number of half edges is the degree of the node. Now these half edges are paired uniformly. Such a configuration will create a graph whose edges are formed without any correlation between the end-nodes. We run our estimator on the configuration model and plot the histogram and distribution as we did for the original graph. Figure \ref{fig:dog_pals_config_f_2} compare function $f_2$ for the configuration model and original graph. The figure shows that in the correlated case (original graph), the affinity to form connection between same breed owners is around $7.5$ times more than that in the uncorrelated case. Figure \ref{fig:dog_pals_config_f_1} shows similar figure in case of $f_1$.
It is important to note that one can create a configuration network model from the crawl and the knowledge of the complete
network is not necessary. In the figures, we show the estimated true value from the configuration model created with the subgraph sampled by the estimator proposed in this paper (blue line in Figure \ref{fig:dog_pals_config_f_2} and red line in Figure \ref{fig:dog_pals_config_f_1}). The estimator is simply $\mu_C(\cD(\csn))=\sum_{(u,v) \in E_c} f(u,v) |E|/|E_c|,$ where $E_c$ is the edge set in the configuration model subgraph. The number of edges $|E|$ can be calculated from the techniques in \cite{Cooper2013}. Interestingly, this estimated value matches with the true value of the configuration model generated from the degree sequence of the entire graph.
\begin{figure}
\hspace{0 cm}
\centering
\includegraphics[scale=0.4]{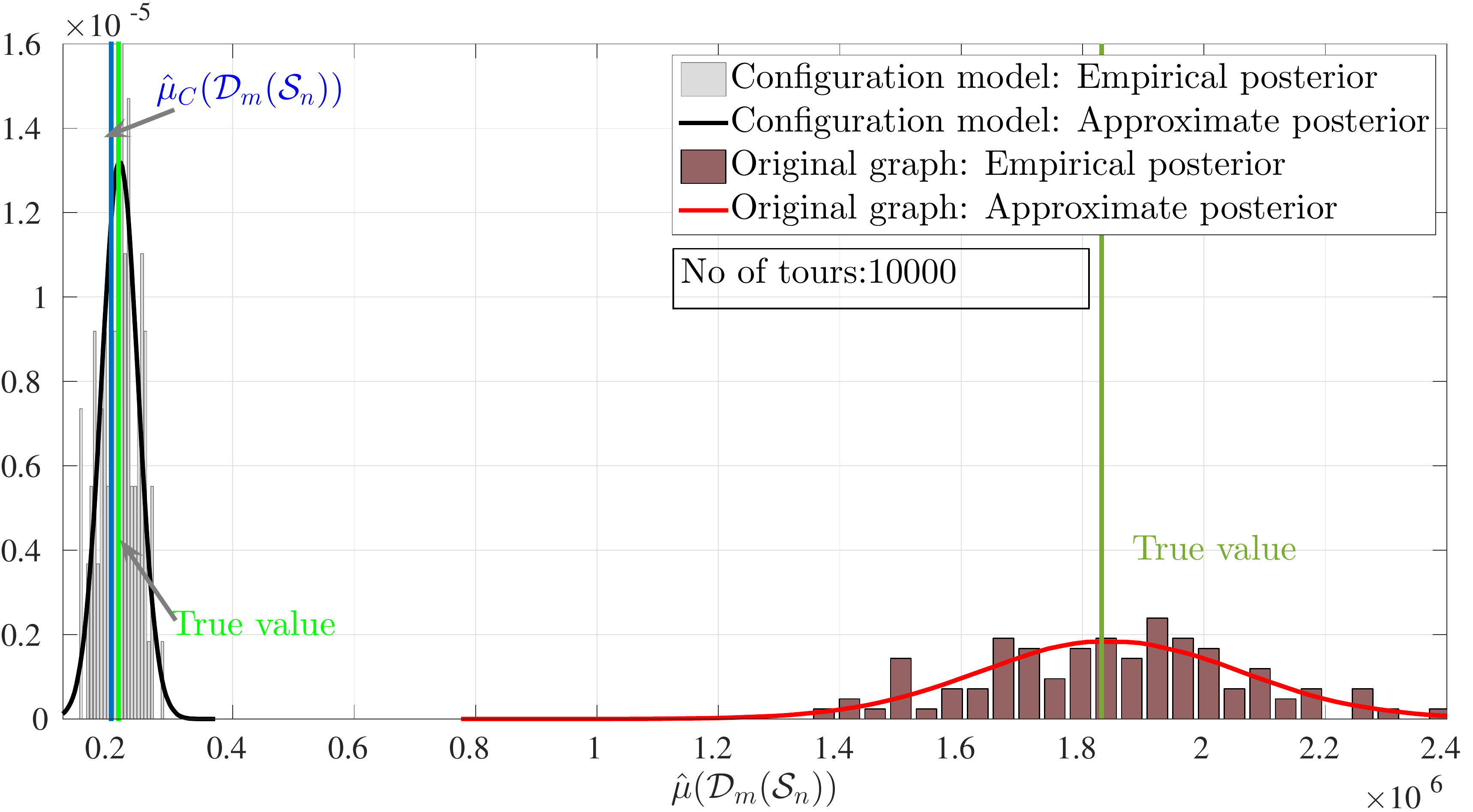}
\caption{Dog pals network: Comparison between configuration model and original graph for $f_2$, number of connection between same breeds}
\label{fig:dog_pals_config_f_2}
\end{figure}
\begin{figure}
\hspace{0 cm}
\centering
\includegraphics[scale=0.4]{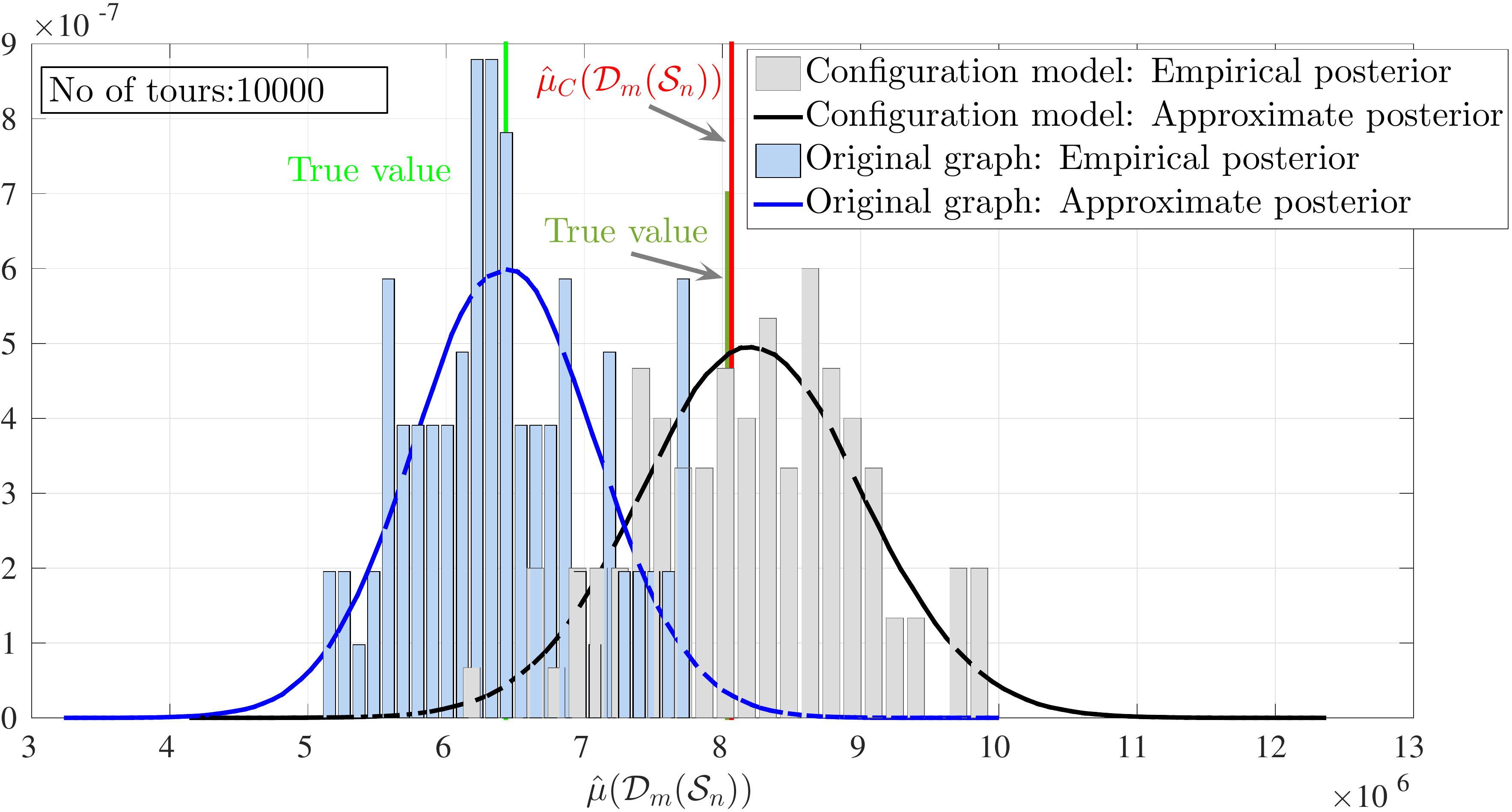}
\caption{Dog pals network: Comparison between configuration model and original graph for $f_1$, number of connection between different breeds}
\label{fig:dog_pals_config_f_1}
\end{figure}

\subsection{ADD Health data}
Though our main result in \eqref{e:post} is the approximation for large values of $m$, in this section we check with a small dataset. We consider ADD network project (\url{http://www.cpc.unc.edu/projects/addhealth}), a friendship network among high school students in US. The graph has $1545$ nodes and $4003$ edges.

We take two types of functions. Figure \ref{fig:ADD_gender} shows the affinity in same gender or different gender friendships and Figure \ref{fig:ADD_race} displays the inclination towards same race or different race in friendships. The random walk tours covered around $10 \%$ of the graph. We find that the theory works reasonably well for this network data. We have not added the empirical posterior in the figures since for such small sample sizes, the empirical distribution can not converge. 

\begin{figure}
\hspace{0 cm}
\centering
\includegraphics[scale=0.4]{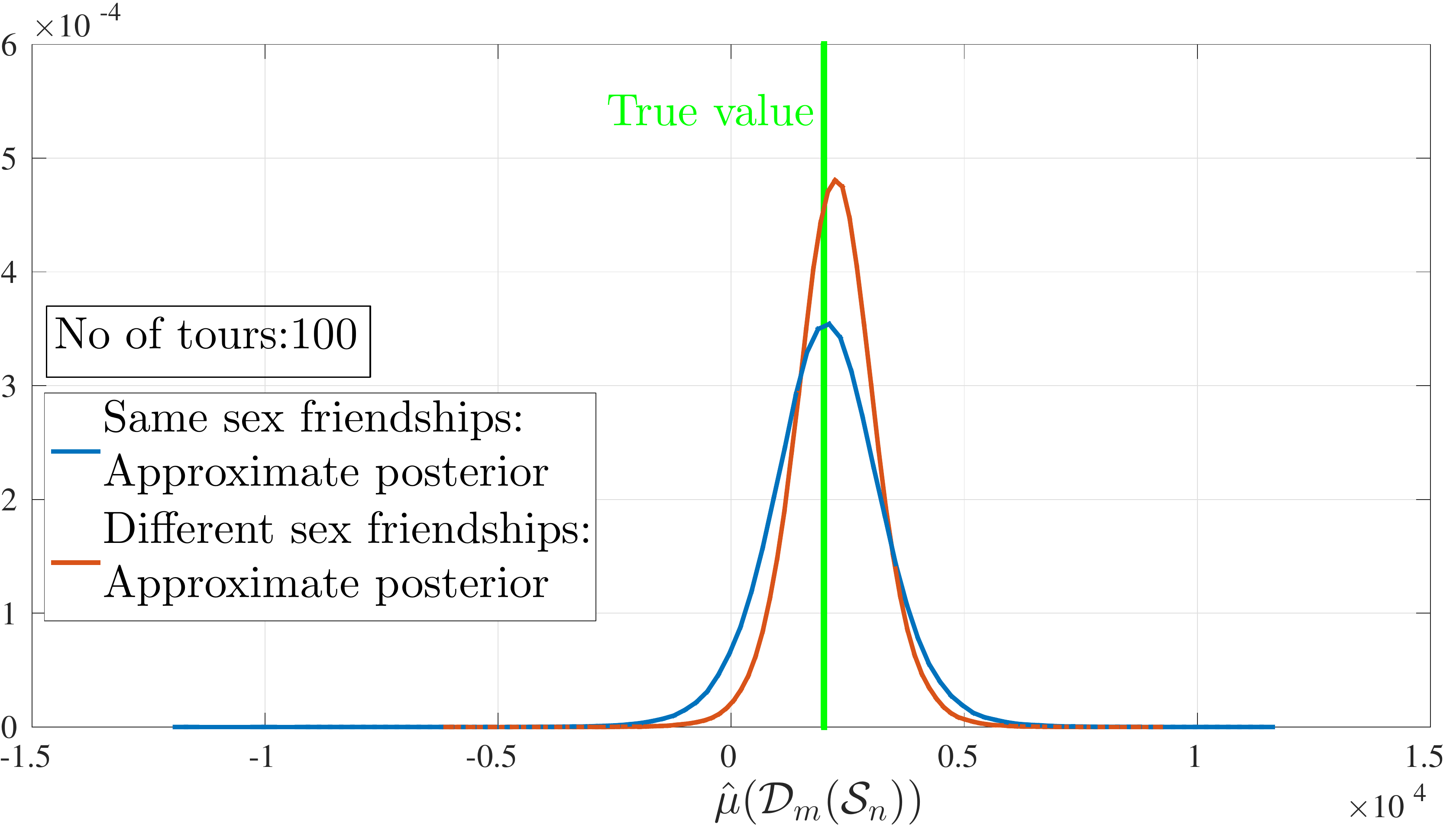}
\caption{ADD network: effect of gender in relationships}
\label{fig:ADD_gender}
\end{figure}

\begin{figure}
\hspace{0 cm}
\centering
\includegraphics[scale=0.4]{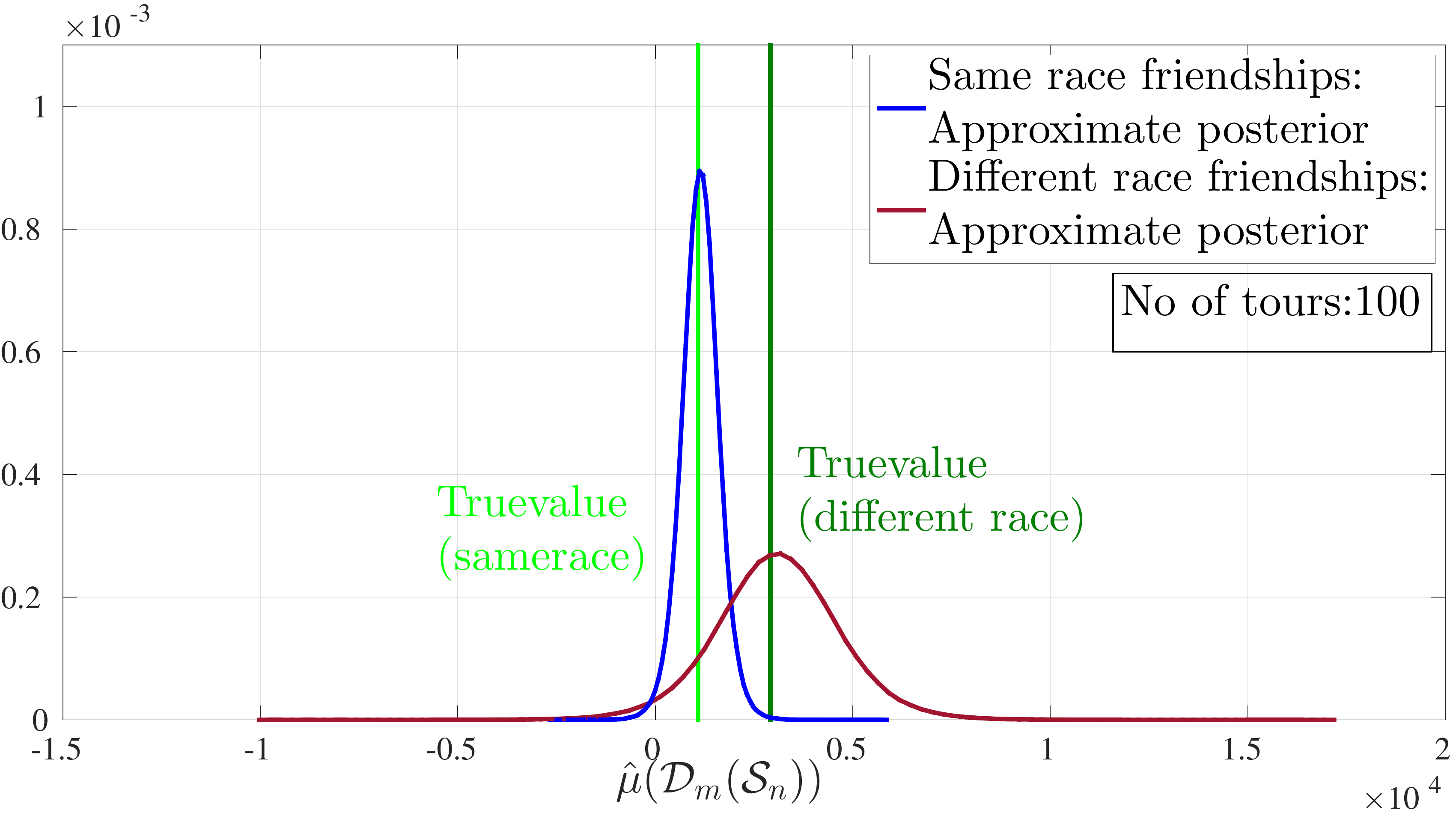}
\caption{ADD network: effect of race in friendships}
\label{fig:ADD_race}
\end{figure} 


\section{Conclusions}
\label{s:conclusions}
In this work we have studied an efficient way to make the statistical inference on networks. We introduced a technique to make a quicker inference by crawling a very small percentage of a network which may be disconnected or having low conductance. The proposed method can also be performed in parallel. In the paper, first we presented a non-asymptotic unbiased estimator and derived the bounds on its variance showing the connection with the spectral gap. Later we proved that the approximate posterior of the estimator given the crawled data converges to a non-centralized student's t distribution. The numerical experiments on real-world networks of different sizes, large and small, demonstrate the correctness of the estimator. In particular, the simulations clearly show that the derived posterior distribution fits very well with the data even while crawling a small percentage of the graph. 

\section*{Acknowledgements}

The authors acknowledge the partial support by ADR "Network Science" of Inria - Bell Labs and by the joint Brazilian-French research team Thanes. This work was in part supported by NSF
grant CNS-1065133 and ARL Cooperative Agreement W911NF-09-2-0053. The views and conclusions contained in this document
are those of the author and should not be interpreted as representing
the official policies, either expressed or implied of the
NSF, ARL, or the U.S. Government. The U.S. Government is
authorized to reproduce and distribute reprints for Government
purposes notwithstanding any copyright notation hereon.

\bibliographystyle{plain}
\bibliography{hyptestgraph}

\section{Appendix}

\begin{lemma}
\label{lem:variance_asym}
\begin{eqnarray*}
\lefteqn{\lim_{n \to \infty} \frac{1}{n} \text{\normalfont var}_{\pi} \left(\sum_{k=2}^n f(Y_{k-1},Y_k) \right)}\nonumber \\
& = & 2 \sum_{i=2}^r \frac{\lambda_i}{1-\lambda_i} \langle f, v_i \rangle_{\hat{\pi}} \,(u_i^\intercal \hat{f})+ \frac{1}{d_{tot}} \sum_{(i,j) \in E} f(i,j)^2 + \frac{1}{d_{tot}^2} (\sum_{(i.j) \in E} f(i,j)^2)^2 \\
& &  \qquad +\frac{1}{d_{tot}^2} \sum_{i \in V} d_i \Big(\sum_{i \sim j} f(i,j)\Big)^2
\end{eqnarray*}
\end{lemma}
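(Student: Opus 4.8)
The plan is to evaluate $\sigma_a^2$ as the complete sum of stationary autocovariances of the edge‑valued sequence $Z_k:=f(Y_{k-1},Y_k)$ and then to expand the resulting node‑space quantities in the eigenbasis of $\bP$ introduced in Section~\ref{s:var}. Because $Y$ is stationary and reversible, $(Z_k)_k$ is a stationary sequence, and
\[
\lim_{n\to\infty}\frac1n\,\mathrm{var}_\pi\Big(\sum_{k=2}^n Z_k\Big)=\mathrm{var}_\pi(Z_2)+2\sum_{l\ge1}\mathrm{Cov}_\pi(Z_2,Z_{2+l}),
\]
the series being absolutely convergent because $\lambda_2<1$; this is precisely the $\sigma_a^2$ already exhibited in the proof of Lemma~\ref{th:variance_tour_estr}, whose finiteness I will take for granted. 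So everything reduces to evaluating the variance and the lagged covariances.

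Second, I would reduce the edge function to node functions by conditioning. With the one‑step average $\hat f(u)=\sum_v p_{uv}f(u,v)$ of Section~\ref{s:var} and its time‑reversed companion $\check f(v)=\sum_u p_{vu}f(u,v)$ (what conditioning an incoming increment on the current state produces, using reversibility $\pi_u p_{uv}=\pi_v p_{vu}$), the Markov property gives, for every $l\ge1$,
\[
E_\pi[Z_k]=\bar f:=\sum_{(u,v)\in E'}\hat{\pi}_{uv}f(u,v),\qquad
E_\pi[Z_k^2]=\sum_{(u,v)\in E'}\hat{\pi}_{uv}f(u,v)^2,\qquad
E_\pi[Z_kZ_{k+l}]=\langle\check f,\bP^{\,l-1}\hat f\rangle_\pi .
\]
Hence $\mathrm{Cov}_\pi(Z_k,Z_{k+l})=\langle\check f,\bP^{\,l-1}\hat f\rangle_\pi-\bar f^{\,2}$; the terms $\mathrm{var}_\pi(Z_2)=E_\pi[Z_2^2]-\bar f^{\,2}$ and the $l=1$ covariance $\langle\check f,\hat f\rangle_\pi-\bar f^{\,2}$ will later supply the non‑spectral pieces of the target formula.

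Third comes the spectral step. Using $\mathbf S=\bD^{1/2}\bP\bD^{-1/2}$ with orthonormal eigenvectors $(w_i)$, hence the biorthogonal left/right eigenvectors $v_i=\bD^{1/2}w_i$, $u_i=\bD^{-1/2}w_i$ of $\bP$, one has $\bP^{\,l-1}=\sum_i\lambda_i^{\,l-1}u_iv_i^{\intercal}$, so that $\langle\check f,\bP^{\,l-1}\hat f\rangle_\pi=\sum_i\lambda_i^{\,l-1}\langle f,v_i\rangle_{\hat{\pi}}(u_i^{\intercal}\hat f)$, in which the $i=1$ term equals $\bar f^{\,2}$ (that eigenpair being the stationary one, $\lambda_1=1$). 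Summing the geometric series $\sum_{l\ge1}\lambda_i^{\,l-1}=1/(1-\lambda_i)$ for $i\ge2$ and cancelling the $i=1$ contribution against the $-\bar f^{\,2}$'s gives $2\sum_{l\ge1}\mathrm{Cov}_\pi(Z_k,Z_{k+l})=2\sum_{i\ge2}\frac{1}{1-\lambda_i}\langle f,v_i\rangle_{\hat{\pi}}(u_i^{\intercal}\hat f)$. Splitting $\tfrac1{1-\lambda_i}=1+\tfrac{\lambda_i}{1-\lambda_i}$ isolates the claimed $\tfrac{\lambda_i}{1-\lambda_i}$ term, while the leftover $\sum_{i\ge2}\langle f,v_i\rangle_{\hat{\pi}}(u_i^{\intercal}\hat f)$ collapses by completeness of the eigenbasis to $\langle\check f,\hat f\rangle_\pi-\bar f^{\,2}$; adding $\mathrm{var}_\pi(Z_2)$ and rewriting everything through $\pi_u=d_u/d_{tot}$, $p_{uv}=\alpha_{uv}/d_u$, $\hat{\pi}_{uv}=\alpha_{uv}/d_{tot}$ and $\hat f(u)=\tfrac1{d_u}\sum_{u\sim v}f(u,v)$ yields the explicit terms $\tfrac1{d_{tot}}\sum_{(i,j)\in E}f(i,j)^2$, the one with $\sum_{i\in V}d_i\big(\sum_{i\sim j}f(i,j)\big)^2$, and the remaining $\bar f$‑squared term.

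I expect the real work to lie in two places. The first is the non‑symmetric case: one genuinely needs the reverse half‑step $\check f$ rather than $\hat f$ on the incoming side, and must check via reversibility that the bilinear form emerging from the spectral expansion is exactly the one written with $\langle f,v_i\rangle_{\hat{\pi}}$ and $u_i^{\intercal}\hat f$ — when $f$ is symmetric $\check f=\hat f$ and this is immediate. The second is the bookkeeping of the final translation, where several $\bar f$‑type and $\langle\check f,\hat f\rangle_\pi$‑type pieces must be recombined into the $d_{tot}$‑normalized edge/vertex quantities exactly as stated, with care about whether edge sums range over ordered or unordered pairs; this is routine but error‑prone. The only genuine analytic point is interchanging $\lim_n$ with $\sum_l$ and $\sum_l$ with $\sum_i$, which is licensed by the geometric domination $|\lambda_i|^{\,l-1}\le\lambda_2^{\,l-1}$ coming from the strict spectral gap $1-\lambda_2>0$.
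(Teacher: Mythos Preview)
Your approach is essentially the paper's: expand the asymptotic variance as the stationary variance plus twice the sum of lagged autocovariances of $f(Y_{k-1},Y_k)$, express the lag-$l$ cross moment as a pairing of $f$ with a power of $\bP$ acting on $\hat f$, and then evaluate via the spectral decomposition of $\bP$. The only differences are organizational: the paper works throughout with the edge-weighted pairing $\langle f,\bP^{\,l-2}\hat f\rangle_{\hat\pi}$ (node functions evaluated at the second endpoint) rather than introducing your $\check f$, and it packages the geometric sum through the fundamental matrix $\bZ=\bI+\sum_{i\ge2}\tfrac{\lambda_i}{1-\lambda_i}u_iv_i^{\intercal}$ obtained by a Ces\`aro-limit argument instead of summing $\sum_{l\ge1}\lambda_i^{\,l-1}$ directly from the spectral gap as you do.
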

\begin{proof}
We extend the arguments in the proof of \cite[Theorem 6.5]{B99} to the edge functions.
When the initial distribution is $\pi$, we have
\begin{align}
\allowdisplaybreaks[4]
\lefteqn{\lim_{n \to \infty} \frac{1}{n} \text{var}_{\pi} \left(\sum_{k=2}^n f(Y_{k-1},Y_k) \right)} \nonumber \\
&= \frac{1}{n} \left(\sum_{k=2}^n \text{var}_{\pi} (f(Y_{k-1},Y_k))+2\sum_{\substack{k,j=2 \\ k<j}} \text{cov}_{\pi} (f(Y_{k-1},Y_k),f(Y_{j-1},Y_j)) \right) \nonumber \\
&= \text{var}_{\pi}(f(Y_{k-1},Y_k))+2 \sum_{l=2}^{n-1} \frac{(n-1)-l}{n}\, \text{cov}_{\pi}(f(Y_{0},Y_1),f(Y_{l-1},Y_l)). 
\label{eq:prf_lemma_var_1} 
\end{align}
Now the first term in \eqref{eq:prf_lemma_var_1} is
\begin{equation}
\text{var}_{\pi}(f(Y_{k-1},Y_k))= \langle f,f \rangle_{\hat{\pi}}-\langle f, \Pi \hat{f} \rangle_{\hat{\pi}},
\end{equation}
where $\Pi=\bone \pi^\intercal$.

For the second term in \eqref{eq:prf_lemma_var_1},
\begin{equation}
\text{cov}_{\pi}(f(Y_{0},Y_1),f(Y_{l-1},Y_l))=E_{\pi}(f(Y_{0},Y_1),f(Y_{l-1},Y_l))-(E_{\pi}[f(Y_{0},Y_1)])^2.
\end{equation}
\begin{align}
\allowdisplaybreaks[4]
\lefteqn{E_{\pi}(f(Y_{0},Y_1),f(Y_{l-1},Y_l))} \nonumber \\
&=  \sum_i \sum_j \sum_k \sum_m \pi_i\, p_{ij}\, p_{jk}^{(l-2)}\, p_{km} f(i,j) f(k,m) \nonumber \\
&=  \sum_i \sum_j \sum_k \pi_i \, p_{ij}\,f(i,j) \, p_{jk}^{(l-2)} \, \hat{f}(k) \nonumber \\
&=  \sum_{(i,j) \in E} \hat{\pi}_{ij}\,\,f(i,j)\, (P^{(l-2)}\hat{f})(j) \nonumber \\
&=  \langle f, P^{(l-2)} \hat{f} \rangle_{\hat{\pi}}.
\end{align}
Therefore,
\[\text{cov}_{\pi}(f(Y_{0},Y_1),f(Y_{l-1},Y_l))= \langle f, (\bP^{(l-2)}-\Pi) \hat{f}\rangle_{\hat{\pi}}. \]
Taking limits, we get
\begin{align}
\allowdisplaybreaks[4]
\lefteqn{\lim_{n \to \infty} \sum_{l=2}^{n-1} \frac{n-l-1}{n} \, (\bP^{(l-2)}-\Pi)} \nonumber \\
&= \lim_{n \to \infty} \sum_{k=1}^{n-3} \frac{n-k-3}{n}\, (\bP^k-\Pi)+(\bI-\Pi) \nonumber \\
&\stackrel{(a)}= \lim_{n \to \infty} \sum_{k=1}^{n-1} \frac{n-k}{n}\, (\bP^k-\Pi)+(\bI-\Pi)-\lim_{n \to \infty} \frac{3}{n}\sum_{k=1}^{n-3} (\bP^k-\Pi) \nonumber \\
&=  (\bZ-\bI)+(\bI-\Pi)= \bZ-\Pi,
\end{align}
where the first term in $(a)$ follows from the proof of \cite[Theorem 6.5]{B99} and since $\lim_{n \to \infty} (\bP^n-\Pi)=0$, the last term is zero using Cesaro's lemma \cite[Theorem 1.5 of Appendix]{B99}.

We have,
\[\bZ=\bI+\sum_{i=2}^r \frac{\lambda_i}{1-\lambda_i} v_i u_i^\intercal,\]
Thus
\begin{align}
\allowdisplaybreaks[4]
\lefteqn{\lim_{n \to \infty} \frac{1}{n} \text{\normalfont var}_{\pi} \left(\sum_{k=2}^n f(Y_{k-1},Y_k) \right)} \nonumber \\
& =  \langle f,f \rangle_{\hat{\pi}}-\langle f, \Pi \hat{f} \rangle_{\hat{\pi}}+2\langle  f,\left(\bI+\sum_{i=2}^r \frac{\lambda_i}{1-\lambda_i} v_i u_i^\intercal-\Pi  \right) \hat{f} \rangle_{\hat{\pi}} \nonumber \\
& =  \langle f,f \rangle_{\hat{\pi}}+\langle f, \Pi \hat{f} \rangle_{\hat{\pi}}+2 \langle  f,\hat{f} \rangle_{\hat{\pi}}+2 \sum_{i=2}^r \frac{\lambda_i}{1-\lambda_i} \langle f, v_i \rangle_{\hat{\pi}} \,(u_i^\intercal \hat{f}) \nonumber \\
& =  \frac{1}{d_{tot}} \sum_{(i,j) \in E} f(i,j)^2+ \frac{1}{d_{tot}^2} (\sum_{(i.j) \in E} f(i,j)^2)^2+\frac{1}{d_{tot}^2} \sum_{i \in V} d_i \Big(\sum_{i \sim j} f(i,j)\Big)^2 \nonumber \\
&   \qquad +2 \sum_{i=2}^r \frac{\lambda_i}{1-\lambda_i} \langle f, v_i \rangle_{\hat{\pi}} \,(u_i^\intercal \hat{f}) 
\end{align}

\end{proof}

\end{document}